\newtheorem{theorem}{Theorem}
\newtheorem{propo}{Proposition}
\newtheorem{lemma}{Lemma}
\newtheorem{corollary}{Corollary}
\newtheorem{assumption}{Assumption}
\def\thickhline{\noalign{\hrule height1pt}}
\newcolumntype{?}{!{\vrule width 1pt}}
\DeclareMathOperator*{\argmin}{arg\,min}
\DeclareMathOperator{\tr}{tr}
\newcommand{\R}{\mathbb{R}}
\newcommand{\Sbb}{\mathbb{S}}
\newcommand{\Xb}{\mathbf{X}}
\newcommand{\Yb}{\mathbf{Y}}
\newcommand{\Ub}{\mathbf{U}}
\newcommand{\Mc}{\mathcal{M}}
\newcommand{\Sc}{\mathcal{S}}
\newcommand{\Fc}{\mathcal{F}}
\newcommand{\Cc}{\mathcal{C}}
\newcommand{\Uc}{\mathcal{U}}
\newcommand{\Nc}{\mathcal{N}}
\newcommand{\Oc}{\mathcal{O}}
\newcommand{\LMI}{\text{LMI}}
\newcommand{\tp}{^\top}
\newcommand{\beq}{\begin{equation}}
	\newcommand{\eeq}{\end{equation}}
\newcommand{\bmat}{\begin{bmatrix}}
	\newcommand{\emat}{\end{bmatrix}}
\newcommand\m[1]{\begin{bmatrix}#1\end{bmatrix}} 
\definecolor{purple}{rgb}{0.62, 0.0, 0.77}
\def\BibTeX{{\rm B\kern-.05em{\sc i\kern-.025em b}\kern-.08em
    T\kern-.1667em\lower.7ex\hbox{E}\kern-.125emX}}
\begin{document}

\title{Data-driven optimal control via  linear programming:  boundedness guarantees}

\author{Lucia Falconi, Andrea Martinelli, and John Lygeros 
\thanks{This work was supported in part by Fondazione CARIPARO (Borse di Dottorato CARIPARO 2020), and by the European Research Council under the Horizon 2020 Advanced Grant 787845 (OCAL). \emph{(L. Falconi and A. Martinelli contributed equally to this work.) (Corresponding author: L. Falconi)}
}
\thanks{L. Falconi is with the Department of Information Engineering, University of Padova, Via Gradenigo 6/B, 35131 Padova, Italy (email: lucia.falconi@phd.unipd.it).
}
\thanks{A. Martinelli and J. Lygeros are with the Automatic Control Laboratory, ETH Zurich, Physikstrasse 3, 8092 Zurich, Switzerland (email: andremar@ethz.ch; jlygeros@ethz.ch).
}
}

\maketitle

\begin{abstract} 

The linear programming (LP) approach is, together with value iteration and policy iteration, one of the three fundamental methods to solve optimal control problems in a dynamic programming setting. Despite its simple formulation, versatility, and predisposition to be employed in model-free settings, the LP approach has not enjoyed the same popularity as the other methods. The reason is the often poor scalability of the exact LP approach and the difficulty to obtain bounded solutions for a reasonable amount of constraints. We mitigate these issues here, by investigating fundamental geometric features of the LP and developing sufficient conditions to guarantee finite solutions with minimal constraints. In the model-free context, we show that boundedness can be guaranteed by a suitable choice of dataset and objective function.
\end{abstract}

\begin{IEEEkeywords}
Approximate dynamic programming, data-driven control, linear programming, optimal control    \end{IEEEkeywords}

\section{Introduction} \label{sec:Introduction}

\subsection{Literature review}

Optimal control problems arise in most engineering and scientific disciplines, from robotics to bioengineering to finance \cite{BoborowRobotics,KuoBioengineering,BERTSIMASFinance}, to name but a few. One approach to solve optimal control problems was developed in the 1950's by Bellman \cite{BellmanDP}. It is based on the so-called \textit{principle of optimality} to derive a functional equation in the state and input variables, the Bellman equation, characterizing the optimal feedback controller. The methods to solve the Bellman equation are collectively known as \textit{dynamic programming} (DP), and typically rely on three fundamental techniques: value iteration, policy iteration, and linear programming (LP) \cite{bertsekas_volI,bertsekas_volII}. The foundations of all three techniques lie in the monotonicity and contractivity properties of the Bellman operator, implicitly defined by the Bellman equation \cite{bertsekas2018abstract}. The main difficulty affecting DP methods is the rapid growth of the computation needed to solve the Bellman equation with the number of variables involved, known as the \textit{curse of dimensionality}. Throughout the years, several approximation techniques have been developed to mitigate this problem, collectively known as \textit{approximate} dynamic programming (ADP) \cite{bertsekas_volII,bertsekas2019RL,PowellADP}. 

In this context, the present manuscript focuses on the LP approach to ADP. Introduced by Manne in the 1960's \cite{ManneLP}, the method is based on the idea that the solution to the Bellman equation can be cast as the optimizer of an LP. In the mid-1980's, Schweitzer presented an approximate version of the LPs for finite state-action spaces, by restricting the function space to a finite family of linearly parameterized functions \cite{SCHWEITZER1985568}. Nevertheless, it is only in the early 2000' with the work by De Farias and Van Roy \cite{de2003linear,deFarias2004} that error bounds due to function approximation and constraint sampling are developed. 

When dealing with infinite state and action spaces, the exact infinite dimensional LPs have to be approximated by
finite dimensional ones \cite{hernandez1998LP,LasserreDTMCP,KlabjanLP,John2017}. Several methods that build upon these approximation schemes have been discussed in recent years, for instance by using Sum-of-Squares polynomials \cite{summers2013sos} or iterated Bellman inequalities \cite{wang2015approximate}.

In many real-world applications one has to operate a system without the knowledge of its dynamics. Learning how to optimally regulate a system is possible by letting it interact with the environment and collecting information about the resulting state transitions and costs (or rewards) over time, an approach classically known as \textit{reinforcement learning} (RL) \cite{Sutton1998,BUSONIU2018}. To extend RL methods in the LP approach framework, one can reformulate the Bellman equation in terms of the $Q$-function \cite{Watkins1992}, and set up a new class of LPs based on the Bellman operator for $Q$-functions. The main advantage of the $Q$-function formulation is that one can extract a policy directly from $Q$, without knowledge of the system dynamics or stage cost. 

Cogill and co-authors were the first to introduce an LP formulation based on $Q$-functions in 2006 \cite{cogill2006}. A first analysis on the associated error bounds can be found in \cite{SutterLP} for finite state-action spaces, and in \cite{Beuchat2016PerformanceGF} for continuous ones. Learning the optimal $Q$-function for deterministic systems by mixing LP with approximate value iteration and policy iteration schemes is investigated in \cite{banj2019} and \cite{TANZANAKIS20207}. The relaxed Bellman operator introduced in \cite{MARTINELLIAut} allows one to design efficient LPs based on $Q$-functions even for stochastic systems. In case of linear or affine dynamics, one can generate new constraints offline starting from a small dataset \cite{MartinelliCDC}, and approximate the expectation in the LP constraints with the estimators introduced in \cite{MartinelliAffine}. A taxonomy of this literature is provided in Table \ref{tab:tax}.

\subsection{Motivations}

The exact LP formulation for finite state-action spaces was viewed as primarily a theoretical result for many years, since it requires formulating an LP where the number of constraints is equal to the number of state-action pairs \cite[Ch. 3.8]{PowellADP}. The method has seen a resurgence in recent years, concurrently with major advances in LP solution methods \cite{boyd}, and especially via approximation schemes to deal with continuous state-action spaces and model-free approaches \cite{bertsekas2019RL}. Indeed, most of the above-mentioned literature has shown encouraging results, particularly for low-dimensional systems. To finally become competitive even for higher-dimensional problems, we argue that the main obstacle is the difficulty of systematically obtaining bounded solution. 

As first mentioned in \cite{MARTINELLIAut}, the number of constraints needed to keep the LP bounded grows dramatically with the dimension of the system. Indeed, we note that most of the (even recent) literature on the topic only propose control examples with minimal dimension and/or massive dataset \cite{de2003linear,deFarias2004,wang2015approximate,SutterLP,Beuchat2016PerformanceGF,banj2019,TANZANAKIS20207}. Some papers deal with this problem by introducing a regularizer to bound a certain norm of the optimal value/$Q$-function \cite{SutterLP}. This, however, can introduce a significant and hard to quantify distortion of the optimizers, since in case of unboundedness the solution is effectively determined by the regularizer and not the constraints. To the best of our knowledge, the only work to explicitly consider sufficient conditions to obtain bounded solutions without regularizers is \cite{CDC22_meyn}. They consider the ``optimal" LP formulation, that is, the LP whose solution is the optimal value/$Q$-function, and provide a sufficient condition based on the rank of a certain data-based covariance matrix. We will review this condition in Section \ref{sec:appLP_opt} together with additional insights. 

\subsection{Outline and Contributions}

The main goal of this manuscript is to provide systematic and practical conditions on the constraints to guarantee finite solution of the LPs ``associated to a policy", that is, the LPs whose solution is the value/$Q$-function associated to a fixed control policy. To do so, we first investigate some fundamental geometric features of the LPs feasible regions. 

In Section~\ref{sec:exactLP_pi} we focus on the exact LP formulations, showing that (i) the feasible region of the LPs associated to a policy is a convex cone open-ended in the ``south-west" direction, and (ii) the feasible region of the optimal LPs is a convex set that can be either bounded or unbounded.
Section \ref{sec:appLP_pi} is devoted to approximate LPs, where the function space is restricted  to a finite dimensional linear subspace
and the constraints are either sampled or observed from system exploration. We prove that the approximate LPs enjoy similar geometric features  to
their exact counterparts. Then, we use Farkas' Lemma to prove that the LPs associated to a policy are bounded if and only if the objective function lies in the conical hull of the rows of a certain data matrix. Finally, under mild assumptions, we show that if the solution is bounded, then it is optimal. 
In Section \ref{sec:smartsampl} we focus on linear quadratic problems. We show that the LP feasible region is bounded if and only if the system is reachable. Then, we specialize the previous results and discuss two systematic methods to guarantee bounded (hence optimal) solutions by properly selecting the gradient of the objective function of the LP based on the observed data. 
The numerical results in Section~\ref{sec:numerics} provide evidence for the difficulty of obtaining bounded solutions for the approximate LPs when random sampling is employed. On the other hand, they illustrate the effectiveness of the strategies in Section \ref{sec:smartsampl} where we consistently obtain bounded solutions with minimal constraints.

\begin{table*}[h]
  \centering
  \begin{tabular}{ll?ll}
    State-Action Space & Model & Value Function & $Q$-function \\ 
    \thickhline
    \multirow{2}{*}{Finite} & Deterministic & -- & -- \\
    \cline{2-4}
    & Stochastic & Manne \cite{ManneLP}, Schweitzer \cite{SCHWEITZER1985568}, De Farias \cite{de2003linear,deFarias2004}, Desai \cite{DesaiSmoothed} & Cogill \cite{cogill2006}, Sutter \cite{SutterLP} \\
    \hline
    \multirow{2}{*}{Continuous} & Deterministic & Klabjan \cite{KlabjanLP} & Banjac \cite{banj2019}, Tanzanakis \cite{TANZANAKIS20207}, Lu \cite{CDC22_meyn} \\
    \cline{2-4}
    & Stochastic & Hernández-Lerma \cite{hernandez1998LP}, Summers \cite{summers2013sos}, Wang \cite{wang2015approximate}, Esfahani \cite{John2017} & Beuchat \cite{Beuchat2016PerformanceGF}, Martinelli \cite{MartinelliCDC,MartinelliAffine,MARTINELLIAut} \\
  \end{tabular}
  \caption{Taxonomy of papers on the LP approach to ADP. The present paper deals with continuous state-action spaces, deterministic models, and $Q$-function formulation.}
  \label{tab:tax}
\end{table*}

\subsection{Notation}
$\R_{+}$ ($\R_{++}$) is the space of nonnegative (positive) real num-bers;  $\Sbb^n$ is the space of real symmetric matrices of size $n$; $I$ is the identity matrix. We denote positive semidefinite (definite) matrices $M \in \Sbb^n$ with $M \succeq 0 \; (\succ 0)$.  
Given a finite-dimensional vector-space $\Yb$, we denote by $\Sc(\Yb)  $ the vector space of  real-valued measurable functions with a finite weighted sup-norm \cite[ch.2.1]{bertsekas2018abstract}:
$$ \Sc(\Yb) = \left\{ g: \Yb \to \R \; : \; \Vert g(y) \Vert_{\infty,z} := \sup_{y \in \Yb} \frac{|g(y)|}{z(y)} < \infty  \right\}, $$
where $z: \Yb \to \R_{++}$ is a weight function.
The space of finite signed measure is defined as:
$$ \Mc(\Yb) = \left\{ c: \Yb \to \R \; : \; \int_{\Yb} |c(y)|z(y) < \infty  \right\}. $$
Finally, $ \Mc^+(\Yb)$ is the cone of non-negative measures:
$$ \Mc^+(\Yb) = \left\{ c \in \Mc(\Yb) \; : \; c(B)\geq 0,  \forall B \subseteq \Yb \text{ measurable} \right\}. $$
Given a square matrix $A \in \R^{n\times n},$  $\tr(A)$ is the trace of the matrix, $\Vert A \Vert$ is the induced 2-norm, 
$\rho(A)$ denotes its spectral radius (\emph{i.e.} the  maximum of the absolute values of its eigenvalues), and $\sigma_{\max}(A)$ represents the maximum singular value of  $A.$ 
The uniform distribution with support $[a,b]$ is denoted by $\Uc(a,b), $ whereas the normal distribution with mean $\mu$ and covariance $\sigma^2$ is denoted by $\Nc(\mu, \sigma^2).$

\section{Background on Dynamic Programming}
\subsection{General formulation}
We start with the necessary background on the DP formulation of 
infinite-horizon, discrete-time, deterministic 
optimal control problems.
We refer the reader to \cite{bertsekas_volI,bertsekas_volII} for a comprehensive review on the DP theory.

We consider a deterministic discrete-time  dynamical system
\beq \label{eq:sys}
x_{t+1}  = f(x_t, u_t),
\eeq
where $x_t \in \Xb \subseteq \R^n$ represents the state of the system at time $t\in \mathbb{N}$, $u_t \in \Ub \subseteq \R^m$ the control action, 
and $ f : \Xb \times \Ub \to \R $ describes the evolution of the system.
Whenever convenient, we will use the shorthand $x_u^+$ to denote  $f(x,u).$ 
We restrict attention to \emph{stationary deterministic feedback policies} $\Pi = \{\pi \; | \; \pi :  \Xb \to \Ub\}.$ A non-negative \emph{stage-cost} function $l:  \Xb \times \Ub \to \R_{+}$ is associated to each state-action pair. 
Given the initial state $x$, the infinite-horizon $\gamma$-discounted cost for a policy $\pi$ is  
\beq \label{eq:cost_pi}
v_\pi (x) =  \sum_{t=0}^{\infty} \gamma^t l(x_t, \pi(x_t) ),
\eeq
where  $ x_0 = x  $, $x_{t+1} = f(x_t, \pi(x_t))$ for any $t$, and $\gamma \in (0,1]$ is a discount factor. We say that the policy $\pi$ is \emph{proper} if $v_\pi(x) < \infty$ for all $x \in \Xb$. In this case $v_\pi$ is the unique solution to the Bellman equation \cite[Corollary 1.2.2.1]{bertsekas_volII}
\beq \label{eq:Bell_pi}
   v_\pi (x) = l(x, \pi(x)) + \gamma v_\pi (f(x,  \pi(x) ),    \qquad \forall x \in \Xb.
\eeq
The aim of the optimal control problem is to find an optimal policy $\pi^* $ which minimizes the cost \eqref{eq:cost_pi},
\beq \label{eq:cost_opt}
 v^*(x) := v_{\pi^*} (x)  = \inf_{\pi \in \Pi} v_\pi (x)  \qquad \forall x \in \Xb,
\eeq
where $ v^* (x) $ is the optimal cost function. Throughout the paper, we work under the following standard assumptions:
\begin{assumption}\label{ass:stabilizable}
The  stage cost function $l(x,u)$ is non-negative and inf-compact, the system dynamics $f(x,u)$ is continuous, and a proper policy $\pi(x)$ exists. 
\end{assumption}
These assumptions ensure that the problem is well-posed, in the sense that $v^* \in \Sc(\Xb)  $, $\pi^*$ is measurable and the infimum in \eqref{eq:cost_opt} is attained \cite[Lemma 2.20]{hernandez2023introduction}.

The optimal value function $v^*$ is the unique solution to the \emph{Bellman optimality equation} \cite[Proposition 1.2.2]{bertsekas_volII}
\beq \label{eq:Bell_opt}
   v^* (x) = \inf_{u \in \Ub} \{ l(x, u ) + \gamma  v^* (f(x , u)) \}  ,    \qquad \forall x \in \Xb.
\eeq
If $v^*$ is known, then an optimal policy can be derived as
$$
\pi^*(x) = \argmin_{u \in \Ub } \left \{ l(x,u) + \gamma v^*(f(x,u)) \right \} ,    \qquad \forall x \in \Xb.$$
However, evaluating an optimal policy requires not only the availability of the optimal value function $v^*$, but also the dynamics function $f$ and the stage cost function $l$.
If $f$ or $l$ are not known, then $v^*$ is not sufficient for deriving an optimal policy. In such cases, one can introduce the \textit{q}-function \cite{Watkins1992} associated to the policy $\pi$,
\beq \label{eq:q_pi_def}
q_\pi (x,u) := l(x,u) + \gamma 	v_\pi ( f(x,u) ),  \quad \forall (x,u) \in \Xb \times \Ub.
\eeq
This can be interpreted as the cost of playing the input $u$ at the initial state $x$ and the policy $\pi$ thereafter. The \textit{q}-function also satisfies a Bellman equation,
\beq \label{eq:Bell_q_pi}
    q_\pi (x,u) = \underbrace{l(x,u) +  \gamma  q_\pi( f(x,u) , \pi( f(x,u) )  ) }_ {:= (F_\pi q_\pi)(x,u) },
\eeq 	
$\forall (x,u)\in \Xb \times \Ub$, and the optimal \textit{q}-function $q^*$ is defined as
$$ 
 q^*(x,u) := q_{\pi^*} (x,u) = l(x,u) + \gamma  v^* (f(x,u)) \quad 
$$
$\forall (x,u) \in \Xb \times \Ub.$ Thus $q^*$ is the unique solution of the \emph{Bellman optimality equation for q-functions},
\beq \label{eq:Bell_q_opt}
q^* (x,u)  :=  \underbrace {l(x,u) + \gamma 	\inf_{w \in \Ub}  q^*( f(x,u), w)  }_{:= (Fq)(x,u) } ,  
\eeq 
$\forall (x,u) \in \Xb \times \Ub.$ The operators $F_\pi, F : \Sc(\Xb) \to \Sc(\Xb)$ are monotone, $\gamma$-contractive with respect to the sup-norm, and have unique fixed points $q_\pi$, $q^*$, respectively \cite{cogill2006}. The advantage of the \textit{q}-function reformulation is that, once $q^*$ is known, the optimal policy $\pi^*$ can be derived by  
$$ \pi^*(x) = \argmin_{u \in \Ub } q^* (x,u) , \qquad \forall x \in \Xb,$$
without the knowledge of $f$ and $l.$

\subsection{Linear quadratic problem} \label{sec:back_LQR}
We consider the $\gamma$-discounted, deterministic, linear quadratic regulator (LQR) problem where the system is governed by   
the linear dynamics 
\beq \label{eq:LTI_sys}
x_{t+1} = f(x_t, u_t) = A x_t+ B u_t ,
\eeq
and it is subject to a quadratic stage cost 
\beq \label{eq:cost_quad}
l(x,u) =  \begin{bmatrix} x \\ u \end{bmatrix} \tp L \begin{bmatrix} x \\ u \end{bmatrix}, \quad L \in \Sbb^{m+n}.
\eeq
The matrix  $L = \m{L_1 & L_{12} \\ L_{12}\tp & L_2},$ with $L_1 \in \Sbb^n$ and $L_2 \in \Sbb^m ,$ is  positive semidefinite and  $L_2 \succ 0.$ 
We consider linear policies of the form 
\beq
\label{eq:lin_pol}
 \pi(x) =  K x, \quad K \in \R^{m \times n}.
\eeq
The next result follows from standard linear system theory arguments \cite{antsaklis1997linear}.
\begin{propo}\label{propo:stability}
    A linear policy $\pi(x) =  K x $  is proper if and only if  the spectral radius of the closed-loop matrix $\rho( A+BK)$ is smaller than $1/ \sqrt{\gamma}.$ Hence, Assumption \ref{ass:stabilizable} is satisfied if and only if the pair $ (\sqrt{\gamma} A, \sqrt{\gamma} B )$ is stabilizable. 
\end{propo}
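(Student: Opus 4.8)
The plan is to exploit the linear--quadratic structure so that properness of a linear policy becomes a statement about convergence of a matrix power series. First I would fix $\pi(x)=Kx$, set $A_{\mathrm{cl}}:=A+BK$ so that the closed-loop trajectory from $x$ is $x_t=A_{\mathrm{cl}}^t x$, and substitute into the stage cost \eqref{eq:cost_quad}: $l(x_t,Kx_t)=x_t^\top Q_K x_t$ with $Q_K:=\bmat I\\K\emat^\top L\bmat I\\K\emat\succeq 0$. Plugging this into the cost \eqref{eq:cost_pi} and grouping the $t$-th summand as $\big((\sqrt\gamma A_{\mathrm{cl}})^t\big)^\top Q_K(\sqrt\gamma A_{\mathrm{cl}})^t$ gives $v_\pi(x)=x^\top P_K x$, where $P_K:=\sum_{t=0}^\infty\big((\sqrt\gamma A_{\mathrm{cl}})^t\big)^\top Q_K(\sqrt\gamma A_{\mathrm{cl}})^t$; the partial sums are positive semidefinite and nondecreasing, so $x^\top P_K x\in[0,\infty]$ is well defined. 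Hence $\pi$ is proper, i.e. $v_\pi(x)<\infty$ for all $x$, if and only if this series converges to a finite matrix, equivalently if and only if the discrete Lyapunov equation $P_K=Q_K+\gamma A_{\mathrm{cl}}^\top P_K A_{\mathrm{cl}}$ admits a positive semidefinite solution.

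Next I would observe that $Q_K$ is in fact positive definite: the stacking map $x\mapsto\bmat x\\Kx\emat$ is injective and, by Assumption~\ref{ass:stabilizable}, $l$ is inf-compact, which for the quadratic form \eqref{eq:cost_quad} forces $L\succ0$; therefore $x^\top Q_K x>0$ for $x\ne0$, so $Q_K\succeq\lambda_{\min}(Q_K)I$ with $\lambda_{\min}(Q_K)>0$. With this, convergence of the series is governed entirely by $\sqrt\gamma A_{\mathrm{cl}}$: if $\rho(\sqrt\gamma A_{\mathrm{cl}})<1$ then $\Vert(\sqrt\gamma A_{\mathrm{cl}})^t\Vert\le C\lambda^t$ for some $\lambda\in(0,1)$, the series is geometrically dominated, $P_K$ is finite and $\pi$ is proper; if $\rho(\sqrt\gamma A_{\mathrm{cl}})\ge1$ I would choose a nonzero real $x$ in the real $A_{\mathrm{cl}}$-invariant subspace associated with an eigenvalue of modulus $\ge1/\sqrt\gamma$, so that $\gamma^{t/2}\Vert A_{\mathrm{cl}}^t x\Vert\not\to0$, and then $\gamma^t x_t^\top Q_K x_t\ge\lambda_{\min}(Q_K)\,\gamma^t\Vert A_{\mathrm{cl}}^t x\Vert^2\not\to0$, so $v_\pi(x)=\infty$. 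Chaining the equivalences: $\pi(x)=Kx$ is proper $\iff\rho(\sqrt\gamma A_{\mathrm{cl}})<1\iff\rho(A+BK)<1/\sqrt\gamma$, which is the first claim.

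For the ``hence'' part I would note that for \eqref{eq:LTI_sys}--\eqref{eq:cost_quad} the stage cost is non-negative and continuous, $f$ is continuous, and inf-compactness again amounts to $L\succ0$, so the only substantive requirement in Assumption~\ref{ass:stabilizable} is the existence of a proper policy. If $(\sqrt\gamma A,\sqrt\gamma B)$ is stabilizable there is $K$ with $\rho(\sqrt\gamma(A+BK))<1$, which by the first part is a proper (linear) policy. Conversely, if the pair is not stabilizable, the PBH test yields $w\ne0$ with $w^\top A=(\lambda/\sqrt\gamma)w^\top$, $w^\top B=0$ and $|\lambda|\ge1$; then along any trajectory of \eqref{eq:LTI_sys}, regardless of the policy, $w^\top x_t=(\lambda/\sqrt\gamma)^t w^\top x_0$, whence $\gamma^t l(x_t,u_t)\ge\gamma^t\lambda_{\min}(L)\Vert x_t\Vert^2\ge(\lambda_{\min}(L)/\Vert w\Vert^2)\,|w^\top x_0|^2$, a positive constant whenever $w^\top x_0\ne0$ (which is achievable with a real $x_0$), so $v_\pi(x_0)=\infty$ for every policy. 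Hence a proper policy exists, and therefore Assumption~\ref{ass:stabilizable} holds, if and only if $(\sqrt\gamma A,\sqrt\gamma B)$ is stabilizable.

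The hard part will be the ``only if'' direction of the spectral-radius characterization, which hinges on two easily underestimated points: first, that $Q_K$ must be positive definite --- this is exactly where inf-compactness of $l$ enters, and the statement is false without it (e.g.\ $K=0$ with a cost blind to an unstable mode); second, the choice of initial state when $\rho(\sqrt\gamma A_{\mathrm{cl}})\ge1$ must correctly handle complex-conjugate eigenvalue pairs and non-diagonalizable Jordan blocks so that $\gamma^{t/2}\Vert A_{\mathrm{cl}}^t x\Vert$ genuinely fails to vanish. Everything else --- summability of a weighted matrix power series, solvability of a discrete Lyapunov equation, and the Schur property of $\sqrt\gamma A_{\mathrm{cl}}$ --- is the standard linear-systems material referenced after the statement.
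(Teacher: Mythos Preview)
Your argument is correct. Note, however, that the paper does not actually provide a proof of this proposition: it simply states that ``the next result follows from standard linear system theory arguments'' and cites \cite{antsaklis1997linear}. What you have written is precisely a careful unpacking of that standard argument --- expressing $v_\pi$ as the quadratic form associated with the Lyapunov series $\sum_{t\ge 0}(\sqrt{\gamma}A_{\mathrm{cl}})^{t\top}Q_K(\sqrt{\gamma}A_{\mathrm{cl}})^t$, and reading off properness from Schur stability of $\sqrt{\gamma}A_{\mathrm{cl}}$; for the second claim, combining the first part with the PBH characterisation of stabilizability.

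Two remarks. First, your observation that inf-compactness of the quadratic stage cost forces $L\succ 0$ (not merely $L\succeq 0$ with $L_2\succ 0$ as written in Section~\ref{sec:back_LQR}) is exactly the right way to close the ``only if'' direction, and you correctly flag that without it the equivalence fails. Second, in the two places where you invoke an eigenvector --- the non-vanishing of $\gamma^{t/2}\Vert A_{\mathrm{cl}}^t x\Vert$ and the PBH vector $w$ --- the vector may be complex; the clean fix is to work with $w^*$ rather than $w^\top$ and, when a real witness is needed, pass to the real/imaginary parts and use that $\Vert M^t\mathrm{Re}(v)\Vert^2+\Vert M^t\mathrm{Im}(v)\Vert^2=\vert\lambda\vert^{2t}\Vert v\Vert^2$. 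You already identify this as the delicate step, and the patch is routine.
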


The fixed point $q_\pi$ of the Bellman operator $F_\pi$ is \cite{bradtke1992reinforcement}
\beq \label{eq:qpi_LP} q_\pi (x,u) = \m{x \\ u}\tp Q_\pi \m{x \\u},  \quad  Q_\pi \in \mathbb{S}^{m+n}, \eeq
where the matrix $Q_\pi$ solves the Lyapunov equality
\beq \label{eq:lyap_Qpi}
Q_\pi =  L + \gamma  A_K \tp  Q_\pi A_K ,
\eeq
with
\beq \label{eq:AK_def} A_K = \m{A & B \\ KA & KB}. \eeq 
Since the spectral radius of $A_K$ is identical to that of $A + BK$ \cite[Lemma1]{Lu_TAC2019}, then from Proposition \ref{propo:stability} we conclude that $\sqrt{\gamma} A_K $ is asymptotically stable
and the Lyapunov equation \eqref{eq:lyap_Qpi} admits a unique positive definite solution. Finally, we point out that also the fixed point of $F$ is a quadratic form given by \eqref{eq:qpi_LP} with $K=K^*$, where $K^*$ is the optimal control gain \cite{bradtke1992reinforcement}. Moreover, even in the case of affine dynamics, additive noise, or generalized quadratic stage costs, the fixed points of $F_\pi$ and $F$ belong to the space of generalized quadratic functions \cite{MartinelliAffine}. Here, for the sake of simplicity, we focus on LQR problems described by \eqref{eq:LTI_sys}-\eqref{eq:lin_pol}.

\section{The Linear Programming approach to Dynamic Programming}
We introduce the LP approach to solve the Bellman Equations \eqref{eq:Bell_q_pi}-\eqref{eq:Bell_q_opt}
and we analyse the feasible regions of the resulting LPs.

\subsection{Exact linear program for $q_\pi$} \label{sec:exactLP_pi} 

The monotonicity and contractivity properties of the operator $F_\pi$ imply that any $q \in \Sc(\Xb \times \Ub) $ satisfying  
$ q \leq F_\pi q  $
is a lower bound for $q_\pi$.
Maximising among these lower bounds leads to the  infinite-dimensional LP 
    \beq \label{pb:LP_q_pi} 
    \begin{aligned} 
			& \sup_{q \in \Sc(\Xb \times \Ub)  }  \; \int_{\Xb \times \Ub}  q(x,u) c(dx, du) \\
			& \text{s.t.}  \quad   q(x,u) \leq  F_\pi q (x,u) \;\;  \forall  (x,u) ,
    \end{aligned} 
    \eeq
where $c \in \Mc^+$ is a finite non-negative measure over $\Xb \times \Ub$.	  
\begin{propo}	\cite{banj2019}
    The function $q_\pi$ coincides for $c$-almost all $(x,u) \in  \Xb \times \Ub $ with a maximizer of the LP problem \eqref{pb:LP_q_pi}.
\end{propo}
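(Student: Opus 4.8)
The plan is to lean on the three structural facts already recorded in the excerpt: $q_\pi \in \Sc(\Xb\times\Ub)$, the operator $F_\pi$ is monotone, and $F_\pi$ is a $\gamma$-contraction in the weighted sup-norm with unique fixed point $q_\pi$. First I would verify that $q_\pi$ is itself feasible for \eqref{pb:LP_q_pi}. By the Bellman equation \eqref{eq:Bell_q_pi} we have $q_\pi = F_\pi q_\pi$, so the constraint $q_\pi \le F_\pi q_\pi$ holds with equality; moreover $q_\pi$ lies in the decision space $\Sc(\Xb\times\Ub)$ by the remarks following \eqref{eq:Bell_q_opt}. Since $c\in\Mc^+$ and $q_\pi$ has finite weighted sup-norm, the objective $\int_{\Xb\times\Ub} q_\pi\,c(dx,du)$ is finite, so $q_\pi$ is an admissible point with finite value.

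Next I would show that $q_\pi$ dominates every feasible point. Let $q\in\Sc(\Xb\times\Ub)$ satisfy $q \le F_\pi q$ pointwise. Applying the monotone map $F_\pi$ repeatedly and using $q\le F_\pi q$ at each stage yields the chain $q \le F_\pi q \le F_\pi^2 q \le \cdots \le F_\pi^k q$ for every $k\in\mathbb{N}$. Because $F_\pi$ is a $\gamma$-contraction on $\Sc(\Xb\times\Ub)$ with fixed point $q_\pi$, the iterates satisfy $F_\pi^k q \to q_\pi$ (in weighted sup-norm, hence pointwise) as $k\to\infty$, so passing to the limit in the chain gives $q \le q_\pi$ pointwise on $\Xb\times\Ub$. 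Integrating this inequality against the non-negative measure $c$ gives $\int q\,c(dx,du) \le \int q_\pi\,c(dx,du)$. Hence no feasible point attains a larger objective value than $q_\pi$, and combined with the previous paragraph this shows $q_\pi$ is a maximizer of \eqref{pb:LP_q_pi}.

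For the ``$c$-almost all'' part, I would take an arbitrary maximizer $\tilde q$ of \eqref{pb:LP_q_pi}. The domination argument gives $q_\pi - \tilde q \ge 0$ pointwise, while optimality of both $q_\pi$ and $\tilde q$ forces $\int (q_\pi - \tilde q)\,c(dx,du) = 0$. A non-negative function whose integral against $c$ vanishes must be zero $c$-almost everywhere, so $\tilde q = q_\pi$ for $c$-almost all $(x,u)$, which is the claim.

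The only step requiring genuine care is the limit $F_\pi^k q \to q_\pi$: one must make sure the contraction/Banach fixed-point argument applies to the iterates of an \emph{arbitrary} feasible $q\in\Sc(\Xb\times\Ub)$, not merely to points close to $q_\pi$, and that $F_\pi$ maps $\Sc(\Xb\times\Ub)$ into itself so that each $F_\pi^k q$ is a well-defined element on which monotonicity can be reapplied. Both are exactly the properties of $F_\pi$ cited from \cite{cogill2006} in the excerpt. Everything else — finiteness of the objective and the a.e.\ conclusion — is an immediate consequence of $c \ge 0$ together with the weighted-norm integrability built into the definitions of $\Sc(\cdot)$ and $\Mc^+(\cdot)$.
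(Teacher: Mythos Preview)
The paper does not supply its own proof of this proposition; it is stated with a citation to \cite{banj2019} and left without argument. Your proof is the standard one---feasibility of $q_\pi$ via the fixed-point identity, pointwise domination $q\le q_\pi$ for every feasible $q$ by iterating monotonicity and invoking the contraction limit $F_\pi^k q\to q_\pi$, then integrating against the non-negative measure $c$ to conclude optimality and $c$-a.e.\ uniqueness---and it is correct. This is precisely the route one expects the cited reference to take, so there is nothing to contrast.
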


In the LP literature, $c(\cdot)$ is typically selected to be a probability measure \cite{Beuchat2016PerformanceGF,de2003linear}.
For example, if the state-action space is unbounded one can use a Gaussian distribution, or if it is compact a uniform distribution.

Next, we establish some properties of the feasible region
$$
\mathcal{F}_{\pi} := \left\{  q \in \Sc(\Xb \times \Ub)   :  q(x,u) \leq F_\pi q (x,u)  \;\;    \forall  (x,u) \right\}.
$$

\begin{propo}\label{propo:feasible_LP_pi}
The set  $\mathcal{C}_{\pi}:= \mathcal{F}_{\pi} - q_{\pi}$ is a salient convex cone (see Appendix \ref{app:convexcone}  for the definitions).
Moreover, for any $\bar q \in \mathcal{C}_{\pi}  $ and for any $ \lambda \geq 0$,  $\bar q - \lambda \in \mathcal{C}_{\pi}.$
\end{propo}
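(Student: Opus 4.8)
The plan is to \emph{linearize} the inequality defining $\mathcal{F}_\pi$ and exploit the fact that $q_\pi$ is a fixed point of $F_\pi$. First I would write the Bellman operator in affine form $F_\pi q = l + \gamma\,\Phi_\pi q$, where $\Phi_\pi$ is the \emph{linear} operator $(\Phi_\pi q)(x,u):= q\big(f(x,u),\pi(f(x,u))\big)$ on $\Sc(\Xb\times\Ub)$; it is monotone (positivity preserving), and since $F_\pi$ is $\gamma$-contractive in the weighted sup-norm, so is $\gamma\Phi_\pi$. With this notation the feasibility constraint becomes $(I-\gamma\Phi_\pi)q\le l$ pointwise, while the Bellman equation \eqref{eq:Bell_q_pi} reads $(I-\gamma\Phi_\pi)q_\pi=l$. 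Substituting $q=q_\pi+\bar q$ and using linearity of $I-\gamma\Phi_\pi$, the inhomogeneous term $l$ cancels, yielding the clean characterization
\[
\mathcal{C}_\pi=\big\{\bar q\in\Sc(\Xb\times\Ub)\,:\,(I-\gamma\Phi_\pi)\bar q\le 0\big\}=\big\{\bar q\,:\,\bar q\le\gamma\Phi_\pi\bar q\big\}.
\]

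From this description the cone structure is immediate. For the cone/convexity part I would check that $0\in\mathcal{C}_\pi$ and that, for $\bar q_1,\bar q_2\in\mathcal{C}_\pi$ and $\alpha\ge 0$, linearity gives $(I-\gamma\Phi_\pi)(\alpha\bar q_1)=\alpha(I-\gamma\Phi_\pi)\bar q_1\le 0$ and $(I-\gamma\Phi_\pi)(\bar q_1+\bar q_2)=(I-\gamma\Phi_\pi)\bar q_1+(I-\gamma\Phi_\pi)\bar q_2\le 0$, since nonnegative combinations of pointwise-nonpositive functions are pointwise nonpositive. For saliency: if both $\bar q$ and $-\bar q$ lie in $\mathcal{C}_\pi$, then $(I-\gamma\Phi_\pi)\bar q\le 0$ and $(I-\gamma\Phi_\pi)\bar q\ge 0$, so $(I-\gamma\Phi_\pi)\bar q=0$, i.e. $\bar q$ is a fixed point of $\gamma\Phi_\pi$; but uniqueness of the fixed point of $F_\pi$ is exactly injectivity of $I-\gamma\Phi_\pi$ on $\Sc(\Xb\times\Ub)$, so $\bar q=0$ and $\mathcal{C}_\pi\cap(-\mathcal{C}_\pi)=\{0\}$.

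For the final assertion I would view $\lambda\ge 0$ as the constant function with value $\lambda$; since $\Phi_\pi$ maps this constant to itself, $(I-\gamma\Phi_\pi)\lambda=(1-\gamma)\lambda\ge 0$, hence for $\bar q\in\mathcal{C}_\pi$,
\[
(I-\gamma\Phi_\pi)(\bar q-\lambda)=(I-\gamma\Phi_\pi)\bar q-(1-\gamma)\lambda\le 0,
\]
so $\bar q-\lambda\in\mathcal{C}_\pi$ (it remains in $\Sc(\Xb\times\Ub)$ because constant functions do, under the standing assumption on the weight $z$). The argument is short and there is no deep obstacle; the only points needing care are the functional-analytic bookkeeping — that $\Phi_\pi$ is a well-defined bounded linear operator on $\Sc(\Xb\times\Ub)$ inheriting the contraction constant of $F_\pi$ — and, for saliency, ensuring the kernel of $I-\gamma\Phi_\pi$ is genuinely trivial even in the undiscounted case $\gamma=1$, which is precisely what the uniqueness of the fixed point of $F_\pi$ under Assumption \ref{ass:stabilizable} provides, so no separate contraction estimate is required there.
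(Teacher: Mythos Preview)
Your proof is correct and follows essentially the same approach as the paper's: both subtract $q_\pi$ to turn the affine constraint into a homogeneous linear one, invoke uniqueness of the fixed point of $F_\pi$ for saliency, and use $\gamma\le 1$ for the downward-closure statement. Your explicit introduction of the linear operator $\Phi_\pi$ and the characterization $\mathcal{C}_\pi=\{\bar q:(I-\gamma\Phi_\pi)\bar q\le 0\}$ packages the argument more cleanly than the paper's direct element-by-element manipulations (where the paper checks $\lambda q+(1-\lambda)q_\pi\in\mathcal{F}_\pi$ and $-q+2q_\pi$ is a fixed point), but the underlying logic is identical.
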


\begin{proof}
Convexity of $\mathcal{C}_{\pi}$  easily follows   from the fact that the inequality constraint defining $\Fc_\pi$ is affine, thus convex,  in $q.$   For the rest of the proof, consider an arbitrary $q \in \mathcal{F}_{\pi}$ and let $\bar q =  q -  q_{\pi} \in \mathcal{C}_{\pi}.$ 
To show that $\mathcal{C}_{\pi}$ is a cone, 
we have to show that $\lambda \bar q \in \mathcal{C}_{\pi}$ for any $\lambda \geq 0.$ 
Note that   
$ \lambda \bar q = (  \lambda q + (1-\lambda)  q_{\pi} )  - q_{\pi} $ 
belongs to $\mathcal{C}_{\pi}$ if and only if $\lambda q + (1-\lambda)  q_{\pi} \in \mathcal{F}_{\pi}.$
Since $q \in \mathcal{F}_{\pi}$ and $q_\pi$ satisfies  \eqref{eq:Bell_q_pi}, we have that 
\begin{align*}
&    \lambda q(x,u) +  (1-\lambda)  q_{\pi} (x,u)  
\leq  \lambda \left( l(x,u) +  \gamma  q(x_u^+, \pi(x_u^+)  )   \right)  
   \\ 
&  + (1-\lambda) \left( l(x,u)  + \gamma  q_{\pi}(x_u^+, \pi(x_u^+)  )   \right) 
   \\
& = l(x,u) + \gamma \left(  \lambda q(x_u^+, \pi(x_u^+)) + (1-\lambda)  q_{\pi}(x_u^+, \pi(x_u^+)) \right)   
\end{align*}
for any $(x,u).$ We conclude that  $\lambda q + (1-\lambda)  q_{\pi} \in \mathcal{F}_{\pi}$, hence $ \lambda \bar q \in \mathcal{C}_{\pi}.$ 
Recall that the cone $\mathcal{C}_{\pi}$ is salient if and only if  $ - \bar q \in \mathcal{C}_{\pi} $ implies that  $\bar q $ is the identically zero function.
From $q \in \mathcal{F}_{\pi}$ and the Bellman equality \eqref{eq:Bell_q_pi}, we obtain that 
\beq \begin{aligned} \label{eq:proof_salient1}
&-q(x,u) + 2q_\pi(x,u) \geq -l(x,u) - \gamma  q(x_u^+, \pi(x_u^+) )   \\
&+ 2 l(x,u) + 2 \gamma  q_\pi (x_u^+, \pi(x_u^+) )    \\ 
&= l(x,u) + \gamma \left(- q(x_u^+, \pi(x_u^+)) +  2q_\pi (x_u^+, \pi(x_u^+)) \right)
\end{aligned} \eeq
for any $(x,u).$
If $ -\bar q =   -q +  q_{\pi}  = ( -q +2q_{\pi}) -q_\pi$ belongs to  $\mathcal{C}_{\pi},$ then $(-q + 2q_\pi)\in \mathcal{F}_{\pi}$, that is 
\beq \begin{aligned}\label{eq:proof_salient2}
    -q(x,u) + 2q_\pi(x,u)   \leq & 
    l(x,u) + \gamma  \big(- q(x_u^+, \pi(x_u^+)) \\
    &  +  2q_\pi (x_u^+, \pi(x_u^+))\big).  
\end{aligned}
\eeq
Inequalities \eqref{eq:proof_salient1} and \eqref{eq:proof_salient2} hold if only if  
\begin{align*}
-q(x,u) + 2q_\pi(x,u) 
= &
l(x,u) + \gamma \big(- q(x_u^+, \pi(x_u^+))  \\
 &+  2q_\pi (x_u^+, \pi(x_u^+)) \big)  &{}
\end{align*}
for any state-action pair. This means that $-q+ 2q_\pi$ is a fixed point of the Bellman policy operator $F_\pi.$
However, since $q_\pi$ is the unique fixed point of the operator $F_\pi,$ we must have $-q + 2q_\pi = q_\pi.$ Hence $q = q_\pi$ and $\bar q = q - q_\pi$ is identically zero.
Finally we show that  $\bar q - \lambda \in \mathcal{C}_{\pi}$ for any $\lambda \geq 0.$
Indeed,
\begin{align*}
q(x,u) - \lambda   &\leq l(x,u) + \gamma  q(x_u^+, \pi(x_u^+)  )  - \lambda  \\
&\leq l(x,u) + \gamma  \left( q(x_u^+, \pi(x_u^+)  )   - \lambda \right)  
\end{align*}
where the first equality holds because $q \in \mathcal{F}_{\pi}$  and the second because  $\lambda \geq 0$ and 
$\gamma \in (0,1 ] .$
\end{proof}
Proposition \ref{propo:feasible_LP_pi} states that $\mathcal{F}_{\pi}$ is an affine convex cone with apex $q_\pi$.
This  gives an intuitive explanation of the reason why $ q_\pi$ is a maximizer of the LP \eqref{pb:LP_q_pi} for all non-negative measures: since non-negative measures always push towards one direction in the non-negative orthant, they force the apex of the cone to be an optimizer.

\subsection{Approximate linear program for $q_\pi$} \label{sec:appLP_pi}

Computing the q-function by solving the optimization problem \eqref{pb:LP_q_pi} is in general computationally intractable for the following reasons: 
\begin{enumerate}	
	\item The optimization variable $q(x,u)$ lies in an infinite dimensional space $\Sc (\Xb \times \Ub);$
	\item The number of constraints is infinite since the inequalities need to be satisfied for all $(x,u) \in \Xb \times \Ub.$
\end{enumerate}
To overcome these sources of intractability, the original infinite problem is approximated by a finite one.
This approach is referred to as ADP \cite{cogill2006, Beuchat2016PerformanceGF, deFarias2004}. 

The first approximation involves restricting the q-functions to a finite dimensional linear subspace of $\Sc$ as suggested in \cite{SCHWEITZER1985568}.
Specifically, we assume that the q-functions belong to the span of a finite family of linearly independent functions $\Phi_i(x,u)$ for $i = 1,2\dots, r,$ 
and we parameterize the restricted function space as  
$$ \Sc^{\Phi} (\Xb \times \Ub) : =  \left \{ \sum_{i=1}^{r} \alpha_i \Phi_i(x,u) =  \Phi(x,u) \alpha\; | \; \alpha_i \in \R \right\}. $$
Here $\alpha = \m{\alpha_1 & \dots & \alpha_r}\tp$ is the parameters vector and 
$\Phi(x,u) = \m{\Phi_1(x,u) & \dots & \Phi_r(x,u)}$ is the   features  (row) vector 
evaluated at the point $(x,u).$
Substituting $ \Sc $ with $ \Sc^{\Phi}$ in \eqref{pb:LP_q_pi} leads to a semi-infinite approximation of the original problem.
Important classes of approximation for q-functions are based on radial basis functions, quadratic functions
and sum-of-squares polynomials \cite{Sutton1998, bertsekas_volII, summers2013sos}.
We note that, while for any positive measure $c$ solving the problem \eqref{pb:LP_q_pi} yields $q_\pi$, this is no longer the case for the approximate LP. 
When restricting the functions to a subspace, the choice of state-action relevance weights $c$ may have a significant impact
on the quality of the resulting approximation. Indeed, the measure $c$ can be interpreted as allocating approximation quality over the state-action space \cite{de2003linear, Beuchat2016PerformanceGF}. 

For the infinite number of constraints, if the dynamics $f$ and the cost $l$ are available and have a certain structure, and the   features vector $ \Phi$  is selected appropriately, 
the constraint set can sometimes be represented exactly or relaxed via the $\Sc$-procedure or sum-of-squares \cite{wang2015approximate, summers2013sos}.
However, if we assume that the dynamics is unknown, we cannot approximate the feasible set using these tools. 
In this case it is possible to relax the constraints via sampling \cite{deFarias2004,John2017}.
If we measure the state $x_i \in \Xb$, apply the input $u_i \in \Ub$, and observe the new state $x_i^+ = f(x_i, u_i)$ and the incurred cost $l_i = l(x_i, u_i),$ we can obtain a set of tuples  $D =\{(x_i,u_i,x_i^+,l_i)\}_{i=1}^{N}$. Note that we use index $i$ instead of $t$ to emphasize that data tuples do not have to be consecutive in time.  
If we consider only a finite number of constraints, one for each tuple, we obtain the following data-driven finite dimensional approximation of the LP \eqref{pb:LP_q_pi}
\beq \label{pb:LP_q_pi_data} 
    \begin{aligned} 
     &\tilde{J}_{\pi} = \sup_{q \in \Sc^{\Phi} }  \; \int_{\Xb \times \Ub} q(x,u) c(dx, du) \\
     &\text{s.t.}  \quad   q(x_i,u_i) \leq l_i + \gamma  q(x_i^+, \pi(x_i^+))  \quad   i=1,\dots N.
    \end{aligned} 
\eeq
The constraint set 
\begin{align*}
\tilde{\mathcal{F}}_{\pi} := \{ q \in \Sc^{\Phi} \; | \; q (x_i,u_i) \leq F_\pi q(x_i, u_i) , \; i=1, \dots , N \}
\end{align*}
will of course implicitly depend on the unknown dynamics $f$ and the cost function $l$, but once the dataset in known, 
the explicit dependence is not needed.

\begin{assumption}\label{ass:basis_fcn}
    The set $ \Sc^{\Phi}$ contains the fixed point $q_\pi$  of the Bellman equation \eqref{eq:Bell_q_pi}.
\end{assumption}
While for LQR problems this assumption is easy to satisfy by restricting attention to the space of quadratic functions, for most other problems it is difficult to choose the right $\Phi$ such that this hypothesis holds.
Nonetheless, since below we focus on the effect of constraint sampling, rather than of q-function parameterization, it is convenient to work under this simplifying assumption. 
 \begin{assumption}\label{ass:full_rank}
    The data matrix 
    \beq\label{eq:def_datamatrix}
    M= \m{ 
    	\Phi(x_1,u_1)-\gamma \Phi(x_1^+,\pi(x_1^+)) 
    	\\ 
    	\vdots
    	\\
    	\Phi(x_N,u_N)-\gamma \Phi(x_N^+,\pi(x_N^+)) 
    } \in \R^{N \times r} 
    \eeq
    has full column rank.
 \end{assumption}
Assumption \ref{ass:full_rank} can be thought of as a \emph{persistence of excitation} requirement \cite{goodwin2014adaptive,de2019formulas}. It implies that $ N \geq r$ and $r$ constraints in Problem \eqref{pb:LP_q_pi_data} are linearly independent. 
Note that the rank of $M$ depends on the choice of the state-input pairs $(x_i,u_i),$ but also on the choice of $\Phi.$

\begin{propo}\label{propo:LP_q_pi_data}	
The set $\tilde{\mathcal{C}}_\pi:= \tilde{\mathcal{F}}_\pi  - q_{\pi}$  is a polyhedral salient convex cone.
Moreover, for any $\bar q \in \tilde{\mathcal{C}}_\pi$ and $\lambda \geq 0$, if $\bar q  - \lambda \in \Sc^{\Phi}$ then $\bar q  - \lambda \in \tilde{\mathcal{C}}_\pi$
\end{propo}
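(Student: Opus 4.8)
The plan is to mirror the structure of the proof of Proposition~\ref{propo:feasible_LP_pi}, exploiting the fact that $\tilde{\mathcal{F}}_\pi$ is obtained from $\mathcal{F}_\pi$ by keeping only finitely many of the constraints and intersecting with the subspace $\Sc^\Phi$. First I would note that $\tilde{\mathcal{F}}_\pi$ is defined by $N$ affine inequalities $q(x_i,u_i)\le F_\pi q(x_i,u_i)$ restricted to the finite-dimensional space $\Sc^\Phi$; writing $q=\Phi\alpha$, each constraint reads $(M\alpha)_i \le l_i$, so $\tilde{\mathcal{F}}_\pi$ is (the image under $\alpha\mapsto\Phi\alpha$ of) a polyhedron in $\R^r$. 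Since $q_\pi\in\Sc^\Phi$ by Assumption~\ref{ass:basis_fcn}, the translate $\tilde{\mathcal{C}}_\pi=\tilde{\mathcal{F}}_\pi-q_\pi$ is again a polyhedron, and it is a \emph{cone}: by the exact same one-line convex-combination argument as in Proposition~\ref{propo:feasible_LP_pi}, if $q\in\tilde{\mathcal{F}}_\pi$ then $\lambda q+(1-\lambda)q_\pi\in\tilde{\mathcal{F}}_\pi$ for $\lambda\ge 0$ (the Bellman equality for $q_\pi$ holds pointwise, in particular at the sampled points, and $\lambda q+(1-\lambda)q_\pi\in\Sc^\Phi$ since $\Sc^\Phi$ is a subspace), hence $\lambda\bar q\in\tilde{\mathcal{C}}_\pi$. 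This gives ``polyhedral convex cone''.

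For saliency, the clean route is to use Assumption~\ref{ass:full_rank}. A polyhedral cone $\{\alpha : M\alpha\le 0\}$ is salient (pointed) if and only if $M$ has trivial kernel, i.e. full column rank: indeed if $M$ has full column rank then $M\alpha\le 0$ and $M(-\alpha)\le 0$ force $M\alpha=0$, hence $\alpha=0$. Translating back through $q=\Phi\alpha$ and using linear independence of the $\Phi_i$, this says precisely that $-\bar q\in\tilde{\mathcal{C}}_\pi$ and $\bar q\in\tilde{\mathcal{C}}_\pi$ imply $\bar q\equiv 0$. Alternatively, and perhaps more in the spirit of the preceding proof, one can reuse the argument there verbatim: from the two inequalities one deduces that $-q+2q_\pi$ satisfies $F_\pi(-q+2q_\pi)(x_i,u_i)=(-q+2q_\pi)(x_i,u_i)$ at the $N$ sampled points — but one cannot conclude it is a global fixed point of $F_\pi$ from finitely many equalities, so here one genuinely needs Assumption~\ref{ass:full_rank} to close the gap. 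I would present the full-column-rank argument, as it is shortest.

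The last claim — for $\bar q\in\tilde{\mathcal{C}}_\pi$ and $\lambda\ge0$, if $\bar q-\lambda\in\Sc^\Phi$ then $\bar q-\lambda\in\tilde{\mathcal{C}}_\pi$ — follows exactly as the corresponding statement in Proposition~\ref{propo:feasible_LP_pi}: writing $\bar q=q-q_\pi$, the chain $q(x_i,u_i)-\lambda\le l_i+\gamma q(x_i^+,\pi(x_i^+))-\lambda \le l_i+\gamma\big(q(x_i^+,\pi(x_i^+))-\lambda\big)$ holds at each sampled point because $q\in\tilde{\mathcal{F}}_\pi$ and $\gamma\in(0,1]$, so $q-\lambda\in\tilde{\mathcal{F}}_\pi$ (the membership in $\Sc^\Phi$ being exactly the added hypothesis), hence $(q-\lambda)-q_\pi=\bar q-\lambda\in\tilde{\mathcal{C}}_\pi$. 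The only real subtlety relative to the exact case is the bookkeeping about staying inside $\Sc^\Phi$ — constant functions need not lie in the span of $\Phi$, which is why the hypothesis $\bar q-\lambda\in\Sc^\Phi$ is imposed rather than derived — so I would be careful to invoke the subspace property of $\Sc^\Phi$ wherever a new function is formed, and to flag that the full-rank assumption is what upgrades the ``equality at $N$ points'' to genuine pointedness. That upgrade is the one step that is not a direct transcription of the earlier proof, so it is where I would concentrate the argument.
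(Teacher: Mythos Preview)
Your proposal is correct and follows essentially the same route as the paper: the convex-cone and ``shift by $-\lambda$'' parts are handled exactly as in Proposition~\ref{propo:feasible_LP_pi}, and saliency is reduced to Assumption~\ref{ass:full_rank}. The only cosmetic difference is in how the saliency step is phrased: you pass directly to coordinates, identify $\tilde{\mathcal C}_\pi$ with $\{\beta:M\beta\le 0\}$, and invoke the standard fact that this cone is pointed iff $\ker M=\{0\}$; the paper instead carries the $-q+2q_\pi$ computation of Proposition~\ref{propo:feasible_LP_pi} down to the sampled points, arrives at the linear system $M(2\alpha_\pi-\alpha)=(l_1,\dots,l_N)^\top$, and uses full column rank of $M$ to force $\alpha=\alpha_\pi$. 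These are the same argument at two levels of abstraction---your version is shorter, the paper's makes the link to the Bellman-operator structure more explicit---and you correctly identify that Assumption~\ref{ass:full_rank} is precisely what replaces the global fixed-point uniqueness used in the exact case.
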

\begin{proof}
The proof that  $\tilde{\mathcal{C}}_\pi$ is a convex cone follows arguments similar to the ones of Proposition \ref{propo:feasible_LP_pi}. 
To prove that the cone is salient we show that if
$ \bar q \in  \tilde{\mathcal{C}}_\pi$ and $- \bar q \in  \tilde{\mathcal{C}}_\pi,$ then $\bar q $ is the identically zero function.
Indeed, using the same reasoning as in Proposition \ref{propo:feasible_LP_pi}, we obtain that
\begin{align*}
& - \Phi(x_i,u_i)\alpha  + 2\Phi(x_i,u_i) \alpha_\pi   =  
	  \\ 
&	 l_i -\gamma   \Phi(x_i^+, \pi(x_i^+)) \alpha
	+2 \Phi(x_i^+, \pi(x_i^+)) \alpha_\pi 
\end{align*} 
for $i = 1, \dots, N$. 
Equivalently, 
$
M (2\alpha_\pi - \alpha ) = \m{l_1 & \dots & l_N}\tp.
$ 
This is a linear system of equations which has the unique solution $2\alpha_\pi - \alpha = \alpha_\pi$ under the assumption that $M$ is full-rank.
We conclude that $\alpha = \alpha_\pi $ and therefore  $\bar q = 0.$  
The last statement of the theorem is proved following the same steps of the analogous claim in Proposition \ref{propo:feasible_LP_pi}. 
\end{proof}

The effect of the constraint sampling is to construct a polyhedral cone approximation 
of the infinitely-generated cone 
$ \{ q \in \Sc^{\Phi} \; | \; q (x,u) \leq F_\pi q(x, u) \, , \forall (x,u) \}.$
Specifically, each scenario $(x_i, u_i, l_i, x_i^+)$ restricts the feasible region to a half-space of the $S^{\Phi}$-space, separated by the hyperplane 
$  \left(\Phi(x_i,u_i) - \gamma  \Phi(x_i^+,\pi (x_i^+) \right) \alpha -  l(x_i, u_i) = 0,$
which passes through $q_\pi$ and is tangent to the infinitely-generated cone.
Because of this polyhedral approximation, $\tilde{\mathcal{F}}_\pi$  might be unbounded in the direction of growth of some  non-negative measures,  as shown in the numerical simulations of Section \ref{sec:numerics}. 
The assumption that $\Sc^{\Phi}$ contains $q_\pi$ is crucial;
otherwise there is no guarantee that there is a function in $\Sc^{\Phi} $ that satisfies all the constraints with equality.

From the geometry of feasible region $\tilde{\mathcal{F}}_{\pi}$, it seems plausible that, under boundedness assumptions, 
the optimization problem \eqref{pb:LP_q_pi_data} can recover ``more often than not'' the solution $q_\pi$ exactly. We can envision three possible outcomes:
\begin{enumerate}
    \item $\tilde{J}_{\pi}  = \infty $
    \item $\tilde{J}_{\pi}  < \infty $  and there is a unique optimizer equal to ${q}_{\pi}.$
    \item $\tilde{J}_{\pi}  < \infty $ and there is an infinite number of optimizers which lie on the boundary of the feasible set $\tilde{\mathcal{F}}_{\pi}.$
\end{enumerate}
In the last two cases, the set of optimizers always includes ${q}_{\pi}.$ Indeed, since $ \tilde{\mathcal{F}}_{\pi} $ is a convex polyedral cone 
with unique extreme point ${q}_{\pi},$ by the Fundamental Lemma of Linear Programming \cite[Theorem 2.7]{bertsimas-LPbook}, ${q}_{\pi}$ is a solution of the LP problem.
If the solution is not unique, then it must lie on one face
of the feasible region, that happens to be orthogonal to the measure $c$ (see Appendix \ref{app:convexcone} for the definition of face).
This third case is degenerate in that, perturbing $c$ or the constraints would break the symmetry and lead to one of the previous cases. 
Hence, the statement that ``more often than not'' if the problem is bounded, the optimizer is equal to $ {q}_{\pi}.$  


Problem \eqref{pb:LP_q_pi_data} is always feasible, but there is nothing to ensure that $\tilde{J}_\pi$ is finite.
This is indeed a major challenge,  especially for large-scale systems.   
Next, we exploit \emph{duality theory} to derive a necessary and sufficient condition for the LP \eqref{pb:LP_q_pi_data}  to be bounded. 
\begin{theorem}\label{th:Farkas_pi}
The optimal solution to Problem \eqref{pb:LP_q_pi_data} is finite if and only if there exists a row vector $\lambda = \m{\lambda_1 & \dots & \lambda_N},$ with $\lambda_i \geq 0$ for $i=1, \dots, N,$ such that 
\beq \label{eq:dual}
    \int_{\Xb \times \Ub} \Phi(x,u) c(dx,du) =  \lambda M .
\eeq
\end{theorem}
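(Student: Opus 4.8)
The plan is to recognize Problem \eqref{pb:LP_q_pi_data} as a finite-dimensional linear program in the parameter vector and then apply Farkas' Lemma. Since $\Phi_1,\dots,\Phi_r$ are linearly independent, the correspondence $q = \Phi(\cdot)\alpha \leftrightarrow \alpha$ identifies $\Sc^{\Phi}$ with $\R^r$. Under this identification the $N$ constraints read $M\alpha \le \ell$, with $M$ the data matrix \eqref{eq:def_datamatrix} and $\ell := \m{l_1 & \dots & l_N}\tp$, while the objective becomes the linear functional $\alpha \mapsto \phi_c\,\alpha$, where $\phi_c := \int_{\Xb \times \Ub}\Phi(x,u)\,c(dx,du) \in \R^{1 \times r}$. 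Note that $\phi_c$ is well defined and finite because each $\Phi_i \in \Sc(\Xb\times\Ub)$ and $c \in \Mc^+$, so $\int_{\Xb\times\Ub} |\Phi_i|\,c < \infty$. Hence $\tilde J_\pi = \sup\{\,\phi_c\,\alpha \;:\; M\alpha \le \ell\,\}$, and by Assumption \ref{ass:basis_fcn} the feasible set is nonempty since it contains the coordinate vector $\alpha_\pi$ of $q_\pi$.

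The ``if'' direction is immediate: if $\phi_c = \lambda M$ for some row vector $\lambda \ge 0$, then every feasible $\alpha$ satisfies $\phi_c\,\alpha = \lambda M\alpha \le \lambda \ell$, because $\lambda \ge 0$ and $M\alpha \le \ell$; therefore $\tilde J_\pi \le \lambda\ell < \infty$.

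For the ``only if'' direction I argue by contraposition. By Farkas' Lemma, exactly one of the following holds: (i) there is a row vector $\lambda \ge 0$ with $\lambda M = \phi_c$; (ii) there is a vector $d \in \R^r$ with $Md \le 0$ and $\phi_c\,d > 0$. Suppose (i) fails, so (ii) holds, and fix any feasible $\alpha$ (one exists, as noted above). Then $\alpha + td$ is feasible for every $t \ge 0$, since $M(\alpha + td) = M\alpha + t\,Md \le M\alpha \le \ell$, whereas $\phi_c(\alpha + td) = \phi_c\alpha + t\,\phi_c d \to +\infty$. Thus $\tilde J_\pi = +\infty$. Contrapositively, $\tilde J_\pi < \infty$ forces (i), which is exactly condition \eqref{eq:dual}.

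I do not expect a genuine obstacle here: the statement is essentially Farkas' Lemma (equivalently, LP strong duality) applied to the finite reformulation. The only steps that need a line of justification are (a) the finiteness of $\phi_c$, which uses the weighted-sup-norm space $\Sc$ together with $c \in \Mc^+$, and (b) the nonemptiness of the feasible set, so that ``finite value'' versus ``$+\infty$'' is an exhaustive dichotomy — here Assumption \ref{ass:basis_fcn} enters, exactly as flagged in the discussion preceding the theorem. An equivalent route is to pass to the LP dual $\inf\{\,\lambda\ell \;:\; \lambda M = \phi_c,\ \lambda \ge 0\,\}$ and invoke strong duality, but the Farkas formulation keeps the argument self-contained.
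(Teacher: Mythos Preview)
Your proof is correct and follows essentially the same route as the paper: both rewrite Problem \eqref{pb:LP_q_pi_data} as the finite LP $\sup\{\phi_c\alpha : M\alpha \le \ell\}$ and then invoke LP duality, the paper by stating the dual \eqref{eq:dual_Lp_pi_data} and appealing to the duality theorem (primal bounded $\Leftrightarrow$ dual feasible), you by the equivalent Farkas alternative. Your version is more self-contained and makes explicit the feasibility of the primal via Assumption~\ref{ass:basis_fcn} and the finiteness of $\phi_c$, points the paper leaves implicit.
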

\begin{proof}
The dual LP associated to \eqref{pb:LP_q_pi_data} is \cite[Ch.6]{matouvsek2007understanding}. 
\beq \label{eq:dual_Lp_pi_data}
\begin{aligned} 
 \sup_{\lambda }  \; \;  & \sum_{i=1}^{N} \lambda_i l_i  \\
\text{s.t.}  \; \;  &\lambda_i \geq 0,  \quad  i=1, \dots, N  \\
  & \int_{\Xb \times \Ub} \Phi(x,u) c(dx,du)   =  \lambda M.  
\end{aligned}  \eeq
From the \emph{duality theorem}, the primal problem \eqref{pb:LP_q_pi_data} is bounded if and only if the dual problem \eqref{eq:dual_Lp_pi_data} is feasible, concluding the proof.
\end{proof}

Leveraging on Theorem \ref{th:Farkas_pi}, we derive a sufficient condition to rule out  the degenerate outcome 3) above.
\begin{corollary} \label{coroll:bound_opt}
Let  
$M_i  = \Phi(x_i, u_i) - \gamma \Phi(x_i^+, u_i^+)$ for $i=1,\dots, N,$
and assume that   
\beq \label{eq:C_span}
 \int_{\Xb \times \Ub} \Phi(x,u) c(dx,du)  \notin \text{span} \{ M_{i_1}, \dots , M_{i_{r-1}} \}
\eeq
for any set $ \{i_1, \dots, i_{r-1} \} \subset\{1,2, \dots, N\}.$  
Then, if Problem \eqref{pb:LP_q_pi_data} is bounded, 
its optimizer is unique and coincides with the fixed point $q_\pi $ of the Bellman operator $T_\pi.$
\end{corollary}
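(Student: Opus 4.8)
\emph{Proof plan.} The plan is to pass to the coordinates of the parameter vector $\alpha$ and combine LP duality (Theorem~\ref{th:Farkas_pi}) with the rank hypothesis (Assumption~\ref{ass:full_rank}). Writing $\ell = \m{l_1 & \dots & l_N}\tp$ and $c_\Phi := \int_{\Xb \times \Ub}\Phi(x,u)\,c(dx,du)$, Problem~\eqref{pb:LP_q_pi_data} reads $\sup_{\alpha \in \R^r} c_\Phi\,\alpha$ subject to $M\alpha \leq \ell$, and since $q_\pi = \Phi\alpha_\pi \in \Sc^\Phi$ by Assumption~\ref{ass:basis_fcn}, evaluating the Bellman equality~\eqref{eq:Bell_q_pi} at the data points gives $M\alpha_\pi = \ell$. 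As already observed before the corollary, if the LP is bounded then, by the Fundamental Lemma of Linear Programming and Proposition~\ref{propo:LP_q_pi_data} (which identifies $q_\pi$ as the unique extreme point of the polyhedral cone $\tilde{\mathcal{F}}_\pi$), the optimum is attained at $\alpha_\pi$, with optimal value $c_\Phi\alpha_\pi$.

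I would then argue by contradiction. Suppose some $\alpha' \neq \alpha_\pi$ is also optimal, and set $d := \alpha' - \alpha_\pi \neq 0$. Feasibility of $\alpha'$ gives $\ell + Md = M\alpha' \leq \ell$, i.e. $Md \leq 0$, while optimality of both $\alpha'$ and $\alpha_\pi$ forces $c_\Phi d = 0$. Since the LP is bounded, Theorem~\ref{th:Farkas_pi} provides a row vector $\lambda \geq 0$ with $c_\Phi = \lambda M$; because $M\alpha_\pi = \ell$ the dual objective $\lambda\ell = \lambda M\alpha_\pi = c_\Phi\alpha_\pi$ is constant on the dual feasible set, so this $\lambda$ is dual optimal. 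Complementary slackness applied to the optimal pair $(\alpha',\lambda)$ then yields $\lambda_i = 0$ whenever $M_i\alpha' < l_i$, that is, whenever $M_i d < 0$. Letting $J := \{\, i : M_i d = 0 \,\}$, where $M_i$ is the $i$-th row of $M$, we conclude $c_\Phi = \lambda M = \sum_{i \in J}\lambda_i M_i \in \text{span}\{M_i : i \in J\}$.

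It remains to contradict~\eqref{eq:C_span}. By Assumption~\ref{ass:full_rank}, $M$ has full column rank, so $d \neq 0$ implies $Md \neq 0$, hence $J \subsetneq \{1,\dots,N\}$; moreover every $M_i$ with $i \in J$ is orthogonal to $d$, so $\text{span}\{M_i : i \in J\}$ has dimension $k \leq r-1$. Choosing a basis $M_{i_1},\dots,M_{i_k}$ of this span with $i_1,\dots,i_k \in J$ and, if $k < r-1$, completing it with $r-1-k$ further distinct indices from $\{1,\dots,N\}$ (possible since $N \geq r$), we obtain indices $i_1,\dots,i_{r-1}$ with $c_\Phi \in \text{span}\{M_{i_1},\dots,M_{i_{r-1}}\}$, contradicting~\eqref{eq:C_span}. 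Hence no such $\alpha'$ exists and $q_\pi$ is the unique optimizer of~\eqref{pb:LP_q_pi_data}.

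I expect the only steps requiring care to be the duality bookkeeping — checking that the $\lambda$ supplied by Theorem~\ref{th:Farkas_pi} is a dual optimal solution (automatic here because $M\alpha_\pi = \ell$) so that complementary slackness pins its support inside $J$ — and the final dimension count, where the hypothesis $N \geq r$ is exactly what makes the padding to $r-1$ indices possible; the remaining steps are routine polyhedral geometry and linear algebra.
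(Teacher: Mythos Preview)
Your proof is correct and takes essentially the same approach as the paper: both obtain a dual feasible $\lambda\ge 0$ with $c_\Phi=\lambda M$ from Theorem~\ref{th:Farkas_pi}, invoke complementary slackness, and use the rank hypothesis~\eqref{eq:C_span} together with Assumption~\ref{ass:full_rank} to force any optimizer to coincide with $\alpha_\pi$. The only cosmetic difference is that the paper argues directly (the support of $\lambda$ must contain $r$ linearly independent rows $M_{i_1},\dots,M_{i_r}$, and the resulting square system pins down the optimizer), whereas you phrase the same step contrapositively via a second optimizer $\alpha'$ and the set $J$; your explicit check that every dual feasible $\lambda$ is automatically dual optimal (because $M\alpha_\pi=\ell$) is a nice point the paper leaves implicit.
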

\begin{proof}
    Let  $\tilde{q}_\pi(x,u) = \Phi(x,u) \tilde{\alpha}_\pi$ be an optimizer of \eqref{pb:LP_q_pi_data}. By Theorem \ref{th:Farkas_pi}, there exists a positive  vector $\lambda$ such that \eqref{eq:dual} holds.  Equations \eqref{eq:dual} and \eqref{eq:C_span} imply that there exists $\{i_1, \dots, i_{r} \} \subset\{1,2, \dots, N\} $ such that $\lambda_{i_1}, \dots, \lambda_{i_r}   $ are strictly positive with $\{ M_{i_1}, \dots M_{i_r} \} $ linearly independent.
    Then, from the complementary slackness conditions \cite[Section 5.5.2]{boyd} we have that 
    $ \m{M_{i_1}\tp & \dots  & M_{i_r}\tp  }\tp \tilde{\alpha}_\pi = \m{ l_{i_1} & \dots & l_{i_r}  } \tp. $
    Since $\m{M_{i_1}\tp & \dots & M_{i_r}\tp  }\tp $ is full rank and $q_\pi (x,u)= \Phi(x,u) \alpha_\pi $ is such that  $ \m{M_{i_1}\tp&  \dots & M_{i_r}\tp  }\tp \alpha_\pi = \m{ l_{i_1} & \dots & l_{i_r}  } \tp, $
    we conclude that $\tilde{\alpha}_\pi = \alpha_\pi. $  
\end{proof}

Theorem  \ref{th:Farkas_pi} and its Corollary \ref{coroll:bound_opt} will be exploited in Section \ref{sec:smartsampl}
to formulate a finite LP for $q_\pi$ with boundedness guarantees in the LQR setting.

\subsection{Linear programs for $q^*$} \label{sec:appLP_opt}
Next, we turn our attention to the optimal  q-function $q^*.$ 
\begin{propo}\cite{Beuchat2016PerformanceGF}
    The solution $q^*$ to the Bellman optimality Equation \eqref{eq:Bell_q_opt} coincides $c$-almost everywhere with a solution to the following convex optimization problem:
    \beq 
    \begin{aligned} \label{pb:LP_q_opt} 
			 \sup_{q \in \Sc(\Xb \times \Ub)  }  \; &\int_{\Xb \times \Ub}  q(x,u) c(dx, du) \\
			\text{s.t.}  \quad  & q(x,u) \leq Fq(x,u)   \; \;  \forall  (x,u) \in \Xb \times \Ub
    \end{aligned} 
    \eeq
    where $c\in \Mc^+ $ is a non-negative finite measure over $\Xb \times \Ub.$ 
\end{propo}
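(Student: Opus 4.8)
The plan is to mirror the argument behind Proposition~\ref{propo:feasible_LP_pi} and its surrounding discussion, replacing the Bellman policy operator $F_\pi$ by the Bellman optimality operator $F$, and to use only the three properties of $F$ recalled just after \eqref{eq:Bell_q_opt}: $F$ is monotone, $\gamma$-contractive with respect to the weighted sup-norm $\Vert \cdot \Vert_{\infty,z}$, and has $q^*$ as its unique fixed point. First I would check that $q^*$ is feasible for \eqref{pb:LP_q_opt}. By Assumption~\ref{ass:stabilizable} the problem is well-posed, so $v^* \in \Sc(\Xb)$ and hence $q^*(x,u) = l(x,u) + \gamma v^*(f(x,u)) \in \Sc(\Xb \times \Ub)$; moreover $q^*$ solves \eqref{eq:Bell_q_opt} with equality, i.e. $q^* = Fq^* \leq Fq^*$, so $q^*$ lies in the feasible set.

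Next I would show that every feasible $q$ is a pointwise lower bound for $q^*$. Starting from $q \leq Fq$ and applying the monotone operator $F$ repeatedly produces the non-decreasing chain $q \leq Fq \leq F^2 q \leq \cdots \leq F^k q$ for all $k$. By $\gamma$-contractivity, $\Vert F^k q - q^* \Vert_{\infty,z} \leq \gamma^k \Vert q - q^* \Vert_{\infty,z} \to 0$, so $F^k q \to q^*$ in weighted sup-norm and therefore pointwise; since the sequence is non-decreasing, its limit $q^*$ dominates every term, in particular $q(x,u) \leq q^*(x,u)$ for all $(x,u) \in \Xb \times \Ub$. Because $c \in \Mc^+$ is non-negative, integrating this inequality gives $\int q\, c(dx,du) \leq \int q^*\, c(dx,du)$ for every feasible $q$, so $q^*$ attains the supremum and is a maximizer.

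For the ``$c$-almost everywhere'' clause I would argue that if $\hat q$ is any other maximizer, then $\hat q \leq q^*$ pointwise (by the previous step) while $\int (q^* - \hat q)\, c(dx,du) = 0$ with $q^* - \hat q \geq 0$, which forces $\hat q = q^*$ $c$-almost everywhere. I expect the only mildly delicate point to be the interchange of limit and inequality, i.e. justifying that convergence of $F^k q$ to $q^*$ in $\Vert \cdot \Vert_{\infty,z}$ implies pointwise convergence and hence preserves $q \leq F^k q$ in the limit; but this is immediate from the definition of the weighted sup-norm together with the contraction estimate, so the proof is essentially routine once the operator properties are invoked. Note that, unlike the $q_\pi$ case, no salient-cone geometry is needed here, since \eqref{pb:LP_q_opt} is stated only as a convex program; the geometry of its feasible region is the subject of the remainder of Section~\ref{sec:appLP_opt}.
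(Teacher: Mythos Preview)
The paper does not actually prove this proposition; it is stated with a citation to \cite{Beuchat2016PerformanceGF} and no argument is given. Your proof is correct and is precisely the standard one: it mirrors the one-line justification the paper provides just before the analogous result for $q_\pi$ (``monotonicity and contractivity of $F_\pi$ imply that any $q$ with $q \leq F_\pi q$ is a lower bound for $q_\pi$''), now applied to $F$ in place of $F_\pi$, together with the routine $c$-a.e.\ uniqueness step.
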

Note that, strictly speaking, Problem \eqref{pb:LP_q_opt} is not an LP because $F$ is a convex but nonlinear operator.
    However, it is possible to reformulate \eqref{pb:LP_q_opt} as an equivalent LP by dropping the infimum in $F$, and substituting the nonlinear constraint set with 
    $$	q(x,u) \leq l(x,u) +   q(x_u^+, w ) ,   \; \;  \forall  (x,u,w) \in \Xb \times \Ub^2. $$
Note that, in case of stochastic systems, the problem of obtaining a linear reformulation is not trivial and solutions have been proposed in \cite{cogill2006} and \cite{MARTINELLIAut}. Consider now the feasible set of the problem \eqref{pb:LP_q_opt}
$$
\mathcal F^* =\{ q \; \in \Sc(\Xb \times \Ub)  \; : \;    q(x,u) \leq Fq(x,u)   \; \;  \forall  (x,u) \}.
$$
\begin{propo}\label{propo:feas_q_opt}
    The set $ \mathcal F ^*$ satisfies:
    \begin{enumerate}
        \item $\Fc ^*$ is a convex set with $q^*$ as an extreme point.
        \item For any $q \in \Fc ^*$ and $\lambda \geq 0,$ then $q - \lambda \in  \Fc ^*$
        \item For all policies $\pi$, $\Fc^* \subseteq \Fc_\pi.$
    \end{enumerate}
\end{propo}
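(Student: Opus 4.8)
The plan is to establish the three claims in sequence, reusing the machinery of Proposition~\ref{propo:feasible_LP_pi} wherever possible and isolating the single genuinely new ingredient, namely the extreme-point property in claim~1.

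For claim~1 I would first replace the constraint $q \leq Fq$ by the equivalent infinite family of affine inequalities $q(x,u) \leq l(x,u) + \gamma\, q(x_u^+, w)$ for all $(x,u,w) \in \Xb \times \Ub^2$, that is, the linear reformulation of \eqref{pb:LP_q_opt} recalled just above the statement; then $\Fc^*$ is an intersection of half-spaces of $\Sc(\Xb\times\Ub)$ and hence convex. To show that $q^*$ is an extreme point, the key fact is that every $q \in \Fc^*$ is a pointwise lower bound of $q^*$: monotonicity of $F$ applied iteratively to $q \leq Fq$ gives $q \leq F^k q$ for all $k$, and $\gamma$-contractivity gives $F^k q \to q^*$ in the weighted sup-norm, which in turn forces $q \leq q^*$ everywhere (this is the $q^*$-analogue of the lower-bound remark opening Section~\ref{sec:exactLP_pi}). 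Consequently, if $q^* = \tfrac{1}{2}(q_1 + q_2)$ with $q_1, q_2 \in \Fc^*$, then at every $(x,u)$ both $q_1(x,u)$ and $q_2(x,u)$ are at most $q^*(x,u)$ while their average equals $q^*(x,u)$, which forces $q_1 = q_2 = q^*$; hence $q^*$ is not a proper convex combination of distinct feasible points.

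Claim~2 is the same one-line computation as in Proposition~\ref{propo:feasible_LP_pi}: for a constant $\lambda \geq 0$ the additive constant passes through the infimum, so $F(q-\lambda) = Fq - \gamma\lambda$, and therefore $q(x,u) - \lambda \leq Fq(x,u) - \lambda \leq Fq(x,u) - \gamma\lambda = F(q-\lambda)(x,u)$, using $\gamma \in (0,1]$ and $\lambda \geq 0$. For claim~3 I would simply compare the two Bellman operators: for any $q \in \Fc^*$, any policy $\pi$, and any $(x,u)$, the infimum $\inf_{w \in \Ub} q(x_u^+, w)$ is no larger than the particular value $q(x_u^+, \pi(x_u^+))$, so $q(x,u) \leq l(x,u) + \gamma \inf_{w} q(x_u^+, w) \leq l(x,u) + \gamma\, q(x_u^+, \pi(x_u^+)) = F_\pi q(x,u)$, i.e.\ $q \in \Fc_\pi$.

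The only step that is not a routine rewriting is the pointwise lower-bound property behind claim~1; I expect the main (modest) obstacle to be carefully passing from convergence of $F^k q$ to $q^*$ in the weighted sup-norm to the pointwise inequality $q \leq q^*$, but this is exactly parallel to the $q_\pi$ case and follows from the monotonicity and contraction properties of $F$ recalled after \eqref{eq:Bell_q_opt}. I would also add a brief remark explaining why claim~1 is weaker than the ``salient convex cone with apex $q_\pi$'' conclusion of Proposition~\ref{propo:feasible_LP_pi}: the infimum in $F$ makes the operator concave rather than affine, so scaling a feasible $q$ toward $q^*$ need not preserve feasibility, and $\Fc^*$ is in general genuinely not a cone.
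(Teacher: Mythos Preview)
Your proposal is correct and follows essentially the same route as the paper: convexity from the affine reformulation, claim~2 by the constant-shift computation of Proposition~\ref{propo:feasible_LP_pi}, and claim~3 by specializing $w=\pi(x_u^+)$. The only difference is that for the extreme-point part of claim~1 the paper simply asserts it ``since $q^*$ satisfies all the constraints with equality,'' whereas you actually justify it via the pointwise lower bound $q\le q^*$ for all $q\in\Fc^*$; your argument is more careful but not a different idea.
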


\begin{proof}
Part 1 is straightforward since the constraint in \eqref{pb:LP_q_opt} is convex in $q$.
Moreover, $q^*$ is clearly an extreme point of the set as it satisfies all the constraints with equality.  Part 2 follows the same steps of the corresponding claim in Proposition \ref{propo:feasible_LP_pi}. 
Finally, for Part 3, take $q \in \Fc^*$ and an arbitrary policy $\pi.$
Since $ q(x,u) \leq l(x,u) +   q(x_u^+, w ) $  for any $(x,u,w)$, then in particular the inequality holds for $w = \pi (x_u^+).$ Therefore
$  q(x,u) \leq l(x,u) +   q(x_u^+, \pi (x_u^+)) 
$
for any $(x,u)$ and $q \in \Fc_\pi.$ 
\end{proof}

The first property implies that the feasible set $\Fc^*$ is a convex set, while the second that it is unbounded.
However, as we will see in the following section, the set is not necessarily a cone. Property 3 establishes that, tough not a cone, $\Fc^*$ is contained in the intersection of the conic sets $\Fc_\pi$ for all policies $\pi.$ 

Problem \eqref{pb:LP_q_opt} is computationally intractable. 
Following the same reasoning of Subsection \ref{sec:appLP_pi}, we now provide a finite approximation of Problem \eqref{pb:LP_q_opt}.
We collect a dataset $D = \{x_i,u_i, x_i^+, l_i\}_{i=1}^{N}$  and 
we restrict the problem to a linearly parameterized set of functions  $\Sc^{\Phi}.$
\begin{assumption}\label{ass:basis_fcn_qopt}
The set of functions $\Sc^{\Phi} (\Xb \times \Ub)  $ contains the fixed point $q^*$  of the Bellman equation \eqref{eq:Bell_q_opt}.
\end{assumption}
Then, a data-driven reformulation of  \eqref{pb:LP_q_opt} is given by
\beq 
\begin{aligned} \label{pb:LP_q_opt_data} 
    \sup_{q \in\Sc^{\Phi} (\Xb \times \Ub) }  \;  &\int_{\Xb \times \Ub} q(x,u) c(dx, du) \\
    \text{s.t.}  \qquad    & q(x_i,u_i) \leq Fq(x_i, u_i)    \; \;  i = 1, \dots, N
\end{aligned} 
\eeq
Let $\tilde{\Fc}^*$ be its feasible region:
$$ \tilde{\Fc}^*:= \{ q \in\Sc^{\Phi} \;  | \; q(x_i, u_i) \leq Fq(x_i,u_i), \; i= 1, .. , N \}. $$
The set $\tilde{\Fc}^*$ inherits the properties of $\Fc^*$:
\begin{propo} \label{propo:set_ddLP_opt}
    The feasible set $\tilde{\Fc}^* $ satisfies:
    \begin{enumerate}
        \item $\tilde{\Fc}^*$ is a convex set with $q^*$ as an extreme point.
        \item For any $q \in \tilde{\Fc}^*$ and $\lambda \geq 0,$ if  $q - \lambda \in  \Sc^{\Phi} $ then $q - \lambda \in \tilde{\Fc}^*.$
        \item For all policies $\pi$, $\tilde{\Fc}^* \subseteq \tilde{\Fc}_\pi$ if problems \eqref{pb:LP_q_pi_data} and \eqref{pb:LP_q_opt_data} are built with the same dataset $D$ and the same set of functions $S^\Phi.$
    \end{enumerate}
\end{propo}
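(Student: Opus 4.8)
The plan is to prove Proposition~\ref{propo:set_ddLP_opt} by mirroring the proof of Proposition~\ref{propo:feas_q_opt}, transferring each of the three properties from the infinite-dimensional feasible set $\Fc^*$ to its finite, linearly-parameterized counterpart $\tilde{\Fc}^*$, and simply checking that none of the arguments breaks when we restrict to $\Sc^\Phi$ and to finitely many constraints. For Part~1, I would note that each sampled constraint $q(x_i,u_i)\le Fq(x_i,u_i)$ is, after the linear reformulation dropping the infimum, a finite family of affine inequalities in the coefficient vector $\alpha$ (one per triple $(x_i,u_i,w)$ with the relevant $w$'s generated by the data), so $\tilde{\Fc}^*$ is an intersection of half-spaces, hence convex; and $q^*\in\Sc^\Phi$ by Assumption~\ref{ass:basis_fcn_qopt} and satisfies every sampled constraint with equality (being the global fixed point of $F$), so it is an extreme point. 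Here I should be a little careful about what ``extreme point'' means for a possibly unbounded polyhedron: the cleanest route is to say $q^*$ is the unique point of $\tilde{\Fc}^*$ at which $r$ linearly independent constraints are active — or, even more simply, to observe that $q^*$ lies on the boundary and any representation $q^* = \tfrac12(q_a+q_b)$ with $q_a,q_b\in\tilde{\Fc}^*$ forces, constraint by constraint using the equality at $q^*$, that $q_a=q_b=q^*$ (exactly as in the $\Fc^*$ argument, which never used infiniteness of the constraint set).

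For Part~2, I would repeat verbatim the computation from Proposition~\ref{propo:feasible_LP_pi}/\ref{propo:feas_q_opt}: if $q\in\tilde{\Fc}^*$ and $\lambda\ge0$, then for each data index $i$, $q(x_i,u_i)-\lambda \le l_i + \gamma\, q(x_i^+,w) - \lambda \le l_i + \gamma\bigl(q(x_i^+,w)-\lambda\bigr)$ for all admissible $w$, using $\gamma\in(0,1]$; taking the infimum over $w$ gives $(q-\lambda)(x_i,u_i)\le F(q-\lambda)(x_i,u_i)$. The only genuinely new clause, already present in the statement, is the caveat ``if $q-\lambda\in\Sc^\Phi$'': since $\Sc^\Phi$ is a linear subspace it need not contain constant functions, so the shifted function might leave the parameterized class — exactly the same phenomenon flagged in Propositions~\ref{propo:LP_q_pi_data} and \ref{propo:set_ddLP_opt}'s Part~2 — and I would just remark that under this proviso the inequality computation above puts $q-\lambda$ back in $\tilde{\Fc}^*$. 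For Part~3, I would lift Part~3 of Proposition~\ref{propo:feas_q_opt}: take $q\in\tilde{\Fc}^*$; the sampled optimal-Bellman constraint at index $i$, in its linear form, reads $q(x_i,u_i)\le l_i + q(x_i^+,w)$ for every $w$ appearing in the data for that sample, and in particular — \emph{provided the two LPs are built from the same dataset $D$ and the same $\Sc^\Phi$}, which is the hypothesis — one of those $w$'s is $\pi(x_i^+)$ (or the constraint set for \eqref{pb:LP_q_opt_data} is constructed to include it), yielding $q(x_i,u_i)\le l_i + \gamma\, q(x_i^+,\pi(x_i^+)) = F_\pi q(x_i,u_i)$, i.e. $q\in\tilde{\Fc}_\pi$.

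The main obstacle, and the only place where more than bookkeeping is needed, is pinning down the precise discretization of the constraint $q(x_i,u_i)\le Fq(x_i,u_i)$ that makes Part~3 true: the infinite-dimensional reformulation replaces $Fq$ by the family $\{l(x,u)+\gamma q(x_u^+,w): w\in\Ub\}$, but once we sample we must decide which finitely many $w$'s are kept, and Part~3 only holds if for each $i$ the point $w=\pi(x_i^+)$ is among them (equivalently, if the sampled constraints for \eqref{pb:LP_q_opt_data} dominate those for \eqref{pb:LP_q_pi_data}). I would therefore state explicitly — consistent with the ``same dataset $D$'' hypothesis already in the proposition — that the data-driven constraint at index $i$ is $q(x_i,u_i)\le l_i + \gamma\, q(x_i^+, w)$ imposed for all $w$ in the (finite) set of inputs applied from state $x_i^+$ in $D$, which for a trajectory-style dataset includes $\pi(x_i^+)$; with that convention in place, the inclusion $\tilde{\Fc}^*\subseteq\tilde{\Fc}_\pi$ is immediate, and the rest of the proof is the routine transfer sketched above.
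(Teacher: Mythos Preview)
The paper gives no proof of this proposition at all: it simply remarks that ``the set $\tilde{\Fc}^*$ inherits the properties of $\Fc^*$'' and moves on. Your plan of transferring the three properties of Proposition~\ref{propo:feas_q_opt} to the sampled, parameterized setting is therefore exactly the paper's implicit argument, and your treatments of Parts~1 and~2 are along the right lines.

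The real issue in your proposal is Part~3. You manufacture an obstacle that does not exist in the paper's formulation. Look again at problem~\eqref{pb:LP_q_opt_data}: the $i$-th constraint is $q(x_i,u_i)\le Fq(x_i,u_i) = l_i + \gamma\,\inf_{w\in\Ub}q(x_i^+,w)$, where the infimum is still taken over \emph{all} $w\in\Ub$, not over a finite sampled subset. The program~\eqref{pb:LP_q_opt_data} is finite-dimensional in the decision variable $\alpha$ and has finitely many constraints indexed by $i$, but each constraint is a (convex, generally nonlinear) inequality in $\alpha$; in the LQR case it becomes an LMI. Since $q\in\Sc^\Phi$ and $x_i^+$ are known, the inner minimization over $w$ is perfectly well defined and requires no data about $w$. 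Consequently Part~3 is immediate: if $q\in\tilde{\Fc}^*$ then for every $i$ and every policy $\pi$,
\[
q(x_i,u_i)\ \le\ l_i + \gamma\inf_{w\in\Ub} q(x_i^+,w)\ \le\ l_i + \gamma\, q(x_i^+,\pi(x_i^+)),
\]
which is exactly the $i$-th constraint of $\tilde{\Fc}_\pi$. No assumption that ``$\pi(x_i^+)$ appears among the sampled $w$'s'' is needed, and your proposed convention about which $w$'s to keep should be dropped.

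A smaller caveat on Part~1: your claim that the extreme-point argument ``never used infiniteness of the constraint set'' is too quick. From $q^*=\tfrac12(q_a+q_b)$ and equality of $q^*$ in all $N$ sampled constraints you can indeed deduce that $q_a$ and $q_b$ each satisfy those $N$ constraints with equality, but with only finitely many samples this does not force $q_a=q_b=q^*$ in $\Sc^\Phi$ unless some rank/identifiability condition on the data holds (the analogue of Assumption~\ref{ass:full_rank}). The paper glosses over this point just as it does in Proposition~\ref{propo:feas_q_opt}, so you are not worse off than the paper, but you should not advertise the step as automatic.
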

Analogously to $\Fc^*$, $\tilde{\Fc}^*$ is not necessary a cone rooted at $  q^*.$ 

Boundedness of the optimization problem \eqref{pb:LP_q_opt_data} is addressed in \cite{CDC22_meyn},
where it is shown that, under an ergodicity assumption, if the covariance matrix associated with the basis function 
\begin{align*}
    \Sigma &:=  \left( \lim_{N \to \infty} \frac{1}{N} \sum_{i=0}^{N} \Phi(x_i, u_i)\tp \Phi(x_i, u_i) \right)  \\
     & -\left( \lim_{N \to \infty} \frac{1}{N} \sum_{i=0}^{N} \Phi(x_i, u_i) \right)\tp \left( \lim_{N \to \infty} \frac{1}{N} \sum_{i=0}^{N} \Phi(x_i, u_i) \right)  
\end{align*} 
is full rank, then the constraint region $\tilde{\Fc}^* $ is bounded for $N$ sufficiently large; hence the optimal cost is finite \cite[Theorem 2.2]{CDC22_meyn}.
We note that the full-rank property on the covariance matrix $\Sigma$ involves an implicit assumptions on the choice of $\Phi(x,u).$  Indeed, in view of point 2 of Proposition \ref{propo:set_ddLP_opt}, if the finite-dimensional function space $\Sc^{\Phi}$ is such that $q - \lambda \in  \Sc^{\Phi} $ for any $q \in \Sc^{\Phi} $  and any $\lambda \geq 0,$ then the feasible region $\tilde{\Fc}^*$ is always unbounded and the corresponding matrix $\Sigma$ is rank-deficient.
In addition, the rank of $\Sigma$ also depends on the properties of the underlying system \eqref{eq:sys}. 
Indeed, in the following section, we discuss systems for which the feasible region is unbounded even with an infinite number of constraints. Also in this case, the covariance matrix $\Sigma$ will be rank-deficient.
Finally, we point out that the boundedness result holds only if the data tuples 
$\{x_i,u_i, x_i^+ \}_{i=1}^{N}$ come from one contiguous trajectory of the system and for a sufficiently large number of data points. 

\section{Linear Quadratic Regulator Problem}
\begin{figure} 
	\centering
	\includegraphics[width=\linewidth]{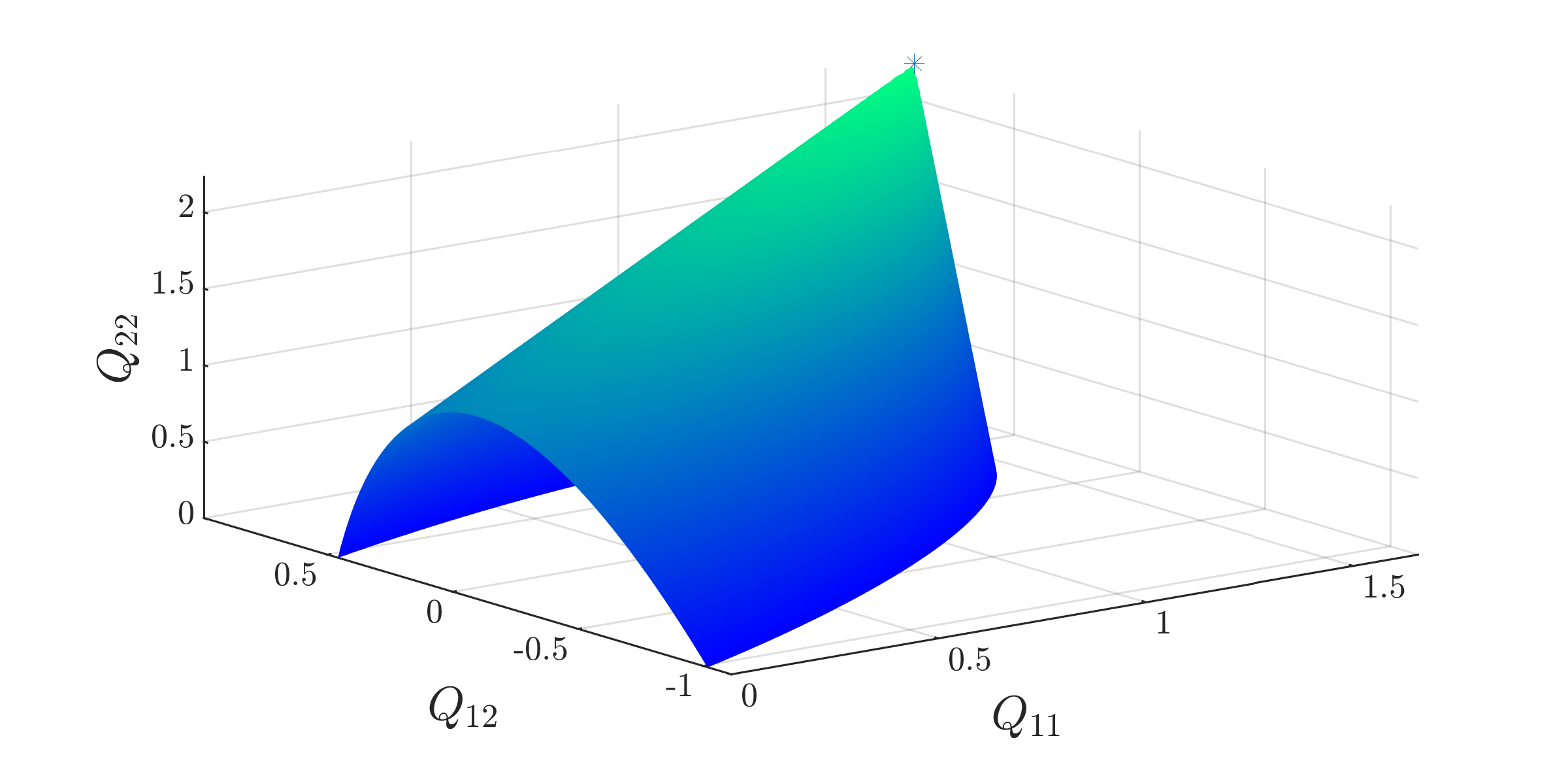}
	\caption{LQR problem: boundary of the feasible region $\Fc_\pi$ defined by \eqref{eq:LQR_pi_cone} for a system with  $n = 1$ and  $m = 1.$ }
	\label{fig:LQR_cone}
\end{figure}
We formulate and investigate the properties of the LPs for $q_\pi$ and $q^*$ in the LQR setting.

\subsection{Exact linear programs}

Consider the LQR problem presented in Subsection \ref{sec:back_LQR}.
Since the functions $q_\pi$ and $q^*$ are quadratic, we can restrict our attention to the space $\Sc^\Phi $ of  functions
\beq \label{eq:q_quad}
q(x,u) = \m{x\\u }\tp Q  \m{x\\u }, \quad Q\in \Sbb^{m+n},
\eeq 
which has dimension $r= 0.5(m+n)(m+n+1).$
Note that this choice of $\Sc^\Phi $ satifies Assumption \ref{ass:basis_fcn} and \ref{ass:basis_fcn_qopt}.

Now, consider the exact LP \eqref{pb:LP_q_pi} for $q_\pi.$
By substituting \eqref{eq:LTI_sys}- \eqref{eq:lin_pol} and \eqref{eq:q_quad} into \eqref{pb:LP_q_pi}, the feasible set becomes
\beq \label{eq:LQR_pi_cone} \Fc_\pi = \{Q \in \Sbb^{m+n} \; : \; L + \gamma A_K\tp Q A_K - Q \succeq 0\} .\eeq
This is a \emph{Linear Matrix Inequality} (LMI) in $Q$, which defines a convex conic constraint, as expected by Proposition \eqref{propo:feasible_LP_pi}.
A representation of $\Fc_\pi$ for a system with $1$ state variable and $1$ input is given in Figure \ref{fig:LQR_cone}.
Moreover, by defining the measure $c(\cdot)$ as a probability distribution  with zero mean and covariance $C \succ 0,$
we can provide a closed-form expression for the objective function
$ \int_{\Xb \times \Ub} q(x,u) c(dx,du)  = \tr (QC). $
Thus, the exact LP \eqref{pb:LP_q_pi} for policy evaluation becomes
\beq \label{pb:LP_LQR_pi}
\begin{aligned}
     \sup_{Q \in \Sbb^{m+n} } \; &\tr(QC) \\
      \text{s.t.} \; \; & L + \gamma A_K\tp Q A_K - Q \succeq 0.
\end{aligned}
\eeq

Analogously, 
by substituting \eqref{eq:LTI_sys}- \eqref{eq:lin_pol} and \eqref{eq:q_quad} into the LP \eqref{pb:LP_q_opt} for $q^*,$
we obtain an explicit characterization of its feasible set as
\begin{align*}
 \m{ x \\ u }\tp Q \m{x \\ u} 
\leq \m{ x \\ u }\tp L \m{x \\ u} + \gamma \m{ Ax + Bu \\ w } \tp Q \m{ Ax + Bu \\w } &{} \\
=  \m{ x \\ u }\tp L \m{x \\ u} + \gamma \m{ x\\ u \\ w }\tp  \m{ A & B & 0\\ 0 & 0 & I }  \tp Q  \m{ A & B & 0\\ 0 & 0 & I }\m{ x\\ u \\ w }  &{}
\end{align*}
for all $(x,u,w) \in \Xb \times \Ub^2.$
Appending $w$ to all the vectors and padding the matrices $L$ and $Q$ with zeros suggests that the feasible set is given by the solution to the following LMI
\beq\label{eq:LMI_Q}
\underbrace{\m{L & 0 \\ 0 & 0} + \gamma \m{ A & B & 0\\ 0 & 0 & I }  \tp Q  \m{ A & B & 0\\ 0 & 0 & I } - \m{Q & 0 \\ 0 & 0} }_{=\LMI(Q)} \succeq 0.
\eeq
Inequality \eqref{eq:LMI_Q} clearly defines a convex set, but in general it is not a cone.  In fact, the set $ \Fc^* := \{ Q \in \mathbb{S} ^{m+n} \; | \; \LMI(Q) \succeq 0\}$ can be either bounded or unbounded, depending on the underlying system.
\begin{theorem}
 \label{propo:LQR_opt_bounded}
  Consider the linear system \eqref{eq:LTI_sys} where $\rho (A) < 1/\sqrt{\gamma}.$
The feasible region $\Fc^*$ is bounded if and only if the pair $(A,B)$ is reachable. 
\end{theorem}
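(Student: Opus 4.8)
The plan is to translate boundedness of $\Fc^*$ into triviality of its recession cone. The set $\Fc^*=\{Q\in\Sbb^{m+n}:\LMI(Q)\succeq0\}$ is closed and convex, and it is nonempty: it contains the matrix realizing the optimal $q$-function $q^*$, which exists because $\rho(A)<1/\sqrt\gamma$ makes the LQR problem well-posed (Proposition~\ref{propo:stability}). Hence $\Fc^*$ is bounded if and only if its recession cone is $\{0\}$. Splitting $\LMI(Q)$ into its constant part $\m{L&0\\0&0}$ and its linear part $\LMI_0(Q):=\gamma\m{A&B&0\\0&0&I}\tp Q\m{A&B&0\\0&0&I}-\m{Q&0\\0&0}$, the recession cone equals $\{D\in\Sbb^{m+n}:\LMI_0(D)\succeq0\}$, and evaluating $\xi\tp\LMI_0(D)\xi$ at $\xi=(x,u,w)$ shows that $\LMI_0(D)\succeq0$ is equivalent to
\[
\bar q(x,u)\le\gamma\,\bar q(Ax+Bu,w)\qquad\text{for all }(x,u,w)\in\Xb\times\Ub^2,
\]
where $\bar q(x,u):=\m{x\\u}\tp D\m{x\\u}$. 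The theorem thus reduces to showing that this homogeneous inequality forces $\bar q\equiv0$ if and only if $(A,B)$ is reachable.

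For the ``if'' direction I would proceed as follows. Assume $(A,B)$ reachable and let $\bar q$, with matrix $D$, satisfy the homogeneous inequality. The key first step is to iterate it along controlled trajectories: choosing $w=u_{k+1}$ gives $\bar q(x_k,u_k)\le\gamma\,\bar q(x_{k+1},u_{k+1})$ whenever $x_{k+1}=Ax_k+Bu_k$, hence $\bar q(x_0,u_0)\le\gamma^t\bar q(x_t,u_t)$ along any trajectory. Since reachability implies controllability to the origin, from an arbitrary $(x_0,u_0)$ one can steer the state to $0$ in finitely many steps and then apply zero input; along such a trajectory $\bar q(x_t,u_t)=\bar q(0,0)=0$ eventually, so $\bar q(x_0,u_0)\le0$, and therefore $D\preceq0$. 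Write $D=-P$ with $P=\m{P_1&P_{12}\\P_{12}\tp&P_2}\succeq0$; the homogeneous inequality becomes $\m{x\\u}\tp P\m{x\\u}\ge\gamma\m{Ax+Bu\\w}\tp P\m{Ax+Bu\\w}$. Setting $x=u=0$ gives $\gamma w\tp P_2w\le0$, so $P_2=0$ and hence $P_{12}=0$ by positive semidefiniteness; setting $x=0$ gives $\gamma u\tp B\tp P_1Bu\le0$, so $P_1B=0$; what remains is $x\tp(P_1-\gamma A\tp P_1A)x\ge0$ for all $x$, i.e.\ $P_1\succeq\gamma A\tp P_1A$. For $v\in\ker P_1$ this yields $0=v\tp P_1v\ge\gamma(Av)\tp P_1(Av)\ge0$, so $Av\in\ker P_1$; thus $\ker P_1$ is $A$-invariant and contains $\mathrm{range}(B)$, hence contains the reachable subspace, which is $\R^n$. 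Therefore $P_1=0$, $D=0$, and $\Fc^*$ is bounded.

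For the ``only if'' direction I would prove the contrapositive by exhibiting a nonzero recession direction. If $(A,B)$ is not reachable, let $V$ be the orthogonal complement of the reachable subspace; it is nonzero, $A\tp$-invariant, and contained in $\ker B\tp$. In an orthonormal basis adapted to $V\oplus V^\perp$ one has $A=\m{A_{11}&0\\A_{21}&A_{22}}$ and $B=\m{0\\B_2}$, with $\rho(A_{11})\le\rho(A)<1/\sqrt\gamma$, so the discrete Lyapunov equation $\tilde P=\gamma A_{11}\tp\tilde PA_{11}+I$ admits a solution $\tilde P\succ0$. Take $P_1=\m{\tilde P&0\\0&0}\succeq0$, which is nonzero, and $D=\m{-P_1&0\\0&0}\ne0$. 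Then $P_1B=0$ and $P_1-\gamma A\tp P_1A=\m{I&0\\0&0}\succeq0$, so the associated $\bar q(x,u)=-x\tp P_1x$ satisfies $\bar q(x,u)=-x\tp P_1x\le-\gamma x\tp A\tp P_1Ax=\gamma\,\bar q(Ax+Bu,w)$ for all $(x,u,w)$. Hence $D$ lies in the recession cone of $\Fc^*$, so $\Fc^*$ is unbounded.

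I expect the main obstacle to be the first step of the ``if'' direction: realizing that one should chain the homogeneous inequality along closed-loop trajectories and use controllability to the origin to deduce $D\preceq0$. Once negative semidefiniteness is available, the remainder is a short sequence of routine manipulations with positive semidefinite matrices together with the invariant-subspace characterization of reachability, and the ``only if'' direction is an explicit Lyapunov-based construction. Some care is also needed to justify the reduction to the recession cone (closedness and nonemptiness of $\Fc^*$) and to keep track of the block structure; note that the hypothesis $\rho(A)<1/\sqrt\gamma$ enters only through nonemptiness of $\Fc^*$ and through solvability of the Lyapunov equation in the ``only if'' construction.
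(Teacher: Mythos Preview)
Your proof is correct and takes a cleaner, more conceptual route than the paper's. You reduce boundedness to triviality of the recession cone of $\Fc^*$ and then work entirely with the homogeneous inequality $\bar q(x,u)\le\gamma\,\bar q(Ax+Bu,w)$; the key idea---chaining this inequality along a controlled trajectory steered to the origin to force $D\preceq0$, followed by a short kernel-invariance argument (\,$\ker P_1$ is $A$-invariant and contains $\mathrm{range}(B)$, hence contains the reachable subspace)---replaces the paper's heavier machinery. The paper never isolates the recession cone: for sufficiency it derives explicit upper and lower bounds on each block $Q_1,Q_2,Q_{12}$ separately, invoking a preparatory lemma proved by iterated pre/post-multiplication of the LMI by $(A^tB)^\top$ and $A^tB$, a decomposition of $Q_1$ along eigenvectors inside and outside $\ker B^\top$, and a generalized Schur complement for $Q_{12}$; for necessity it builds feasible $Q$ of arbitrarily large norm via a Lyapunov equation, essentially the same construction as your recession direction. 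Your argument is shorter and makes the role of reachability transparent in one line; the paper's approach, on the other hand, yields explicit quantitative bounds on feasible $Q$, which could be of independent use. Two small remarks: nonemptiness of $\Fc^*$ is even easier than you indicate, since $Q=0$ is feasible because $L\succeq0$; and the block form you use in the ``only if'' direction is the Kalman controllability decomposition, worth naming explicitly.
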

The proof is deferred to Appendix \ref{app:proof}.

\subsection{Approximate linear program for $q_\pi$}\label{sec:smartsampl}
Given a dataset $D,$ we can approximate the LP \eqref{pb:LP_LQR_pi} as
\beq \label{pb:LP_LQR_pi_data} 
\begin{aligned} 
&\tilde J_{\pi} = \sup_{Q \in \mathbb{S}^{m+n}  }  \; \tr (QC)  \\
& \text{s.t.}  \; \;   \m{x_i \\ u_i}\tp Q \m{x_i \\ u_i} \leq l_i + \gamma  \m{x_i^+\\ K x_i^+ } \tp Q \m{x_i^+\\ K x_i^+ }  \;   i=1,..,N. 
\end{aligned} 
\eeq

Guaranteeing the existence of a bounded optimal solution to \eqref{pb:LP_LQR_pi_data}  is not an easy task.  
This difficulty is even more pronounced  for large-scale systems. If we randomly generate the data points $(x_i, u_i, l_i, x_i^+)$ and the system is high-dimensional, typically $\tilde J_{\pi} < \infty$ only for extremely large $N.$ 
Our objective is to provide a systematic method to properly sample the dataset $D$ and/or to choose the covariance matrix $C$ such that Problem \eqref{pb:LP_LQR_pi_data} is guaranteed to be  bounded.
We first specialize  Theorem \ref{th:Farkas_pi} and Corollary \ref{coroll:bound_opt} to the LQR problem.
\begin{corollary}\label{coroll:LQR}
Define, with a slight abuse of notation, the matrices 
$$
M_i = \m{x_i \\ u_i} \m{x_i \\ u_i}\tp - \gamma \m{x_i^+\\ K x_i^+}\m{x_i^+\\ K x_i^+ }\tp, \; i=1,..,N.
$$
1) Problem \eqref{pb:LP_LQR_pi_data} is bounded if and only if 
there exist $\lambda_i \geq 0$ for $i=1, \dots, N$ such that 
\beq  \label{eq:feas}
C = \sum_{i=1}^{N} \lambda_i M_i 
\eeq 
2) Assume that 
\beq \label{eq:LQR_bound_opt_LQR}
C \notin \text{span} \{ M_{i_1}, \dots , M_{i_{r-1}} \}
\eeq
for any set $ \{i_1, \dots, i_{r-1} \} \subset\{1, \dots, N\}.$ Then, if Problem \eqref{pb:LP_LQR_pi_data} is bounded, the  optimizer is unique and coincides with the solution $Q_\pi$ to the Lyapunov equality \eqref{eq:lyap_Qpi}. 
\end{corollary}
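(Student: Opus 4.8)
The plan is to observe that, once the $q$-functions are restricted to the quadratic family \eqref{eq:q_quad}, Problem \eqref{pb:LP_LQR_pi_data} is nothing but the matrix-space realization of the abstract LP \eqref{pb:LP_q_pi_data}. Concretely, writing $q(x,u) = \tr\!\big(Q\,\m{x\\u}\m{x\\u}\tp\big)$ and using the inner product $\langle P,Q\rangle := \tr(PQ)$ on $\Sbb^{m+n}$, the $i$-th constraint in \eqref{pb:LP_LQR_pi_data} reads $\langle Q, M_i\rangle \le l_i$ with $l_i = \langle L, \m{x_i\\u_i}\m{x_i\\u_i}\tp\rangle$, and the objective is $\langle Q, C\rangle$. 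The half-vectorization $\vech:\Sbb^{m+n}\to\R^r$ is a linear bijection (with $r = 0.5(m+n)(m+n+1)$) carrying $Q$ to the parameter vector $\alpha$, the matrices $M_i$ here to the rows $M_i$ of the data matrix \eqref{eq:def_datamatrix}, $L$-based costs to the $l_i$, and $\tr(\cdot\,C)$ to $\int_{\Xb\times\Ub}\Phi(x,u)c(dx,du)$. Thus Assumption \ref{ass:full_rank} translates to "$\{M_1,\dots,M_N\}$ spans $\Sbb^{m+n}$'', consistent with Proposition \ref{propo:LP_q_pi_data}, and Corollary \ref{coroll:LQR} follows by transporting Theorem \ref{th:Farkas_pi} and Corollary \ref{coroll:bound_opt} through this dictionary.

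For Part 1, I would simply apply Theorem \ref{th:Farkas_pi} (equivalently, write the LP dual of $\sup\{\langle Q,C\rangle : \langle Q, M_i\rangle \le l_i,\ i=1,\dots,N\}$ directly): the dual is $\sup\{\sum_i \lambda_i l_i : \lambda\ge 0,\ \sum_i \lambda_i M_i = C\}$, and by LP strong duality the primal is bounded above if and only if the dual is feasible, which is exactly the existence of $\lambda_i\ge 0$ with $C = \sum_i \lambda_i M_i$, i.e.\ condition \eqref{eq:feas}.

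For Part 2, I would invoke Corollary \ref{coroll:bound_opt}: condition \eqref{eq:LQR_bound_opt_LQR} is precisely \eqref{eq:C_span} transported through $\vech$, which preserves spans (and conic hulls) since it is a linear bijection. Explicitly, if \eqref{pb:LP_LQR_pi_data} is bounded, Part 1 gives $\lambda\ge 0$ with $C = \sum_i\lambda_i M_i$; together with \eqref{eq:LQR_bound_opt_LQR} this forces an index set $\{i_1,\dots,i_r\}$ with $\lambda_{i_j}>0$ and $\{M_{i_1},\dots,M_{i_r}\}$ linearly independent, hence a basis of $\Sbb^{m+n}$. Complementary slackness then forces any optimizer $\tilde Q_\pi$ to satisfy $\langle \tilde Q_\pi, M_{i_j}\rangle = l_{i_j}$ for $j=1,\dots,r$, a full-rank linear system, so $\tilde Q_\pi$ is unique. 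Finally, the Lyapunov solution $Q_\pi$ of \eqref{eq:lyap_Qpi} induces $q_\pi$, which satisfies the Bellman equality \eqref{eq:Bell_q_pi} at every $(x_i,u_i)$, i.e.\ $\langle Q_\pi, M_i\rangle = l_i$ for all $i$; in particular $Q_\pi$ solves those $r$ equations, so $\tilde Q_\pi = Q_\pi$.

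The only delicate point — and the one I would be most careful about — is the dictionary between the matrix objects $M_i, C, Q$ here and the vector objects $M_i, \int\Phi c, \alpha$ of Section III: one must check that the conic identity \eqref{eq:feas}, the span condition \eqref{eq:LQR_bound_opt_LQR}, and the complementary-slackness step line up with \eqref{eq:dual}, \eqref{eq:C_span}, and the proof of Corollary \ref{coroll:bound_opt}. This is routine but mildly subtle, because $\tr(PQ)\neq \vech(P)\tp\vech(Q)$ without a positive diagonal reweighting (the duplication-matrix factor); it is cleanest to sidestep it by arguing intrinsically over the inner-product space $(\Sbb^{m+n},\tr)$ and never passing to coordinates, since the reweighting is a positive diagonal congruence that affects neither spans, conic hulls, nor LP duality.
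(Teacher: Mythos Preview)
Your proposal is correct and follows the paper's route: the paper presents Corollary \ref{coroll:LQR} as a direct specialization of Theorem \ref{th:Farkas_pi} and Corollary \ref{coroll:bound_opt} without an explicit proof, and your trace-inner-product dictionary on $\Sbb^{m+n}$ (with the careful remark on the duplication-matrix reweighting) is exactly the translation that makes this specialization rigorous. The only cosmetic slip---writing the dual as a supremum rather than an infimum---mirrors the paper's own notation in \eqref{eq:dual_Lp_pi_data} and is immaterial since the argument rests on dual feasibility, not the dual value.
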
 

We next propose two different strategies to build Problem \eqref{pb:LP_LQR_pi_data} with boundedness guarantees.

\subsubsection*{Algorithm 1}\label{sec:method1}
Corollary \ref{coroll:LQR} reveals that Problem \eqref{pb:LP_LQR_pi_data} has
as unique optimizer $Q_\pi$ for almost any choice of the
covariance matrix $C$, provided that the problem is bounded.
One can the consider selecting the covariance matrix $C \succ 0$ as a function of the dataset $D$ such that \eqref{eq:feas} holds, instead of fixing it a priori.
This is possible if and only if 
\beq \label{eq:LMIforC}
\begin{aligned} 
\sum_{i=1}^{N} \lambda_i M_i  & \succ 0 \\
 \lambda_i  & \geq 0 \qquad i=1,.., N  
 \end{aligned}
\eeq
has solution, leading to the following problem: choose the pairs $ (x_i, u_i)$  such that \eqref{eq:LMIforC} is feasible.

To solve this problem, first we assume that 
\beq\label{eq:contraction} 
\sigma_{\max} (A_K) < 1/\sqrt{\gamma}.
\eeq 
Then, by Lemma  \ref{lemm:lyap}, 
$ I - \gamma A_K I A_K\tp  \succ 0.$
Comparing this LMI with \eqref{eq:LMIforC}, we conclude that any $D$ such that 
$$
I = \sum_{i=1}^{N} \lambda_i \m{x_i \\ u_i} \m{x_i \\ u_i}\tp
$$
for some $\lambda_i \geq 0$ ensures feasibility of  \eqref{eq:LMIforC}. 
For instance, we can select $\m{x_i\tp & u_i\tp}=e_i\tp $ for $i=1, \dots, n+m$ to meet this condition.
If \eqref{eq:contraction} is not satisfied, we can perform a change of coordinates such that the matrix $A_K$ in the new basis satisfies \eqref{eq:contraction} (see Lemma \ref{lemm:lyap}).
To derive the change of basis matrix, we have to solve the Lyapunov inequality:
$ P - \gamma A_K P A_K\tp \succ 0.$
Since the system dynamics is unknown, this Lyapunov inequality must be solved in a data-driven fashion. 
To this end, we sample an initial  dataset $D_1 = 
\{(x_i, u_i, l_i, x_i^+ )\}_{i=1}^{r}$ such that Assumption \ref{ass:full_rank} holds.
Then, a solution to the Lyapunov inequality is given by 
\beq \label{eq:algo1_P} P = \sum_{i=1}^{r} p_i \m{x_i \\ u_i} \m{x_i \\ u_i}\tp \eeq
where $p_i \in \R $ satisfy
\beq \label{eq:batch1} \sum_{i=1}^{r} p_i \left( \m{x_i \\ u_i} \m{x_i \\ u_i} - \gamma \m{x_i^+ \\ Kx_i^+} \m{x_i^+ \\ Kx_i^+} \right) =  W  \eeq
for an arbitrary $W \in \mathbb{S}^{n+m}$ with  $W \succ 0.$ 

Once $P$ is known, the desired change of coordinates matrix is given by its square root $P^{1/2}.$
Now, consider the vectors 
\beq  \label{eq:algo1_D2} \m{ x_{i+r} \\ u_{i+r} } = P^{1/2} e_i, \quad i=1, \dots, n+m \eeq
and collect the batch of data 
$D_2 = \{( x_i, u_i, l_i, x_i^+) \}_{i=r+1}^{N} $
with $N=r + n + m.$
The dataset $D$ obtained by appending $D_2$ to $D_1$ guarantees that \eqref{eq:LMIforC} is feasible (for example $\lambda_i = 0$ for $i=1,..,r$ and 
$\lambda_i = 1$ for $i=r,..,N$ is a solution). 
Hence, we can set $C$ according to \eqref{eq:feas} where $\lambda_i$'s  satisfy \eqref{eq:LMIforC}. 

The overall procedure is summarized in Algorithm \ref{algo1}. 

\begin{theorem}
    The LP \eqref{pb:LP_LQR_pi_data} where the covariance matrix $C$ and the dataset $D$ are obtained from Algorithm \ref{algo1} is bounded.
\end{theorem}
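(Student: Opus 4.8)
The plan is to reduce the claim to Corollary~\ref{coroll:LQR}(1), which asserts that Problem~\eqref{pb:LP_LQR_pi_data} is bounded if and only if $C = \sum_{i=1}^N \lambda_i M_i$ for some nonnegative scalars $\lambda_i$. Since Algorithm~\ref{algo1} defines $C$ precisely through \eqref{eq:feas}, with the $\lambda_i$ taken from a feasible point of \eqref{eq:LMIforC}, the whole proof amounts to checking that the dataset $D = D_1 \cup D_2$ produced by the algorithm makes \eqref{eq:LMIforC} feasible and that the resulting $C$ is positive definite (so that $\tr(QC)$ is a legitimate objective and Corollary~\ref{coroll:LQR}(1) applies verbatim).

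First I would rewrite the data matrices $M_i$ in closed-loop form. Writing $\zb_i := \m{x_i^\top & u_i^\top}^\top$ and using $x_i^+ = Ax_i + Bu_i$ together with the definition \eqref{eq:AK_def} of $A_K$, one has $\m{x_i^+ \\ Kx_i^+} = A_K \zb_i$, hence $M_i = \zb_i\zb_i^\top - \gamma A_K \zb_i\zb_i^\top A_K^\top$ and, for any multipliers, $\sum_i \lambda_i M_i = S - \gamma A_K S A_K^\top$ with $S := \sum_i \lambda_i \zb_i\zb_i^\top$. Thus \eqref{eq:LMIforC} is feasible as soon as the data admit nonnegative $\lambda_i$ with $S = P$ for some $P$ satisfying the strict Lyapunov inequality $P - \gamma A_K P A_K^\top \succ 0$, $P \succ 0$; such a $P$ exists because $\pi(x)=Kx$ is proper, i.e. $\rho(A_K) = \rho(A+BK) < 1/\sqrt\gamma$, so $\sqrt\gamma A_K$ is Schur (Lemma~\ref{lemm:lyap}).

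Next I would follow the two batches. Batch $D_1$ together with Assumption~\ref{ass:full_rank} renders the linear system \eqref{eq:batch1} in the scalars $p_i$ uniquely solvable (in the quadratic parameterization the rows of $M$ are $\mathrm{vech}(M_i)$, so full column rank means the $M_i$, $i=1,\dots,r$, form a basis of $\Sbb^{m+n}$); since $\sqrt\gamma A_K$ is Schur and $W \succ 0$, the matrix $P = \sum_{i=1}^r p_i \zb_i\zb_i^\top$ of \eqref{eq:algo1_P} is the unique solution of $P - \gamma A_K P A_K^\top = W$ and is therefore positive definite, hence solves the strict inequality. Batch $D_2$ sets $\zb_{r+i} = P^{1/2} e_i$, $i=1,\dots,n+m$; choosing $\lambda_i = 0$ for $i\le r$ and $\lambda_i = 1$ for $i>r$ gives $S = \sum_{i=1}^{n+m}(P^{1/2}e_i)(P^{1/2}e_i)^\top = P^{1/2} I P^{1/2} = P$, so $\sum_i\lambda_i M_i = P - \gamma A_K P A_K^\top \succ 0$. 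Hence \eqref{eq:LMIforC} is feasible, the $C$ returned by \eqref{eq:feas} is positive definite, and Corollary~\ref{coroll:LQR}(1) yields $\tilde J_\pi < \infty$.

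The step I expect to be the main obstacle is the data-driven Lyapunov construction: one must argue that \eqref{eq:batch1} is solvable and, crucially, that its solution reproduces the genuine positive-definite Lyapunov matrix rather than some indefinite matrix, since the $p_i$ are not constrained to be nonnegative --- this is exactly where Assumption~\ref{ass:full_rank} and the uniqueness of the discrete Lyapunov solution are needed. When the contraction condition \eqref{eq:contraction} already holds, this step is vacuous (one may take $\zb_i = e_i$ directly, so that $S = I$ and $\sum_i M_i = I - \gamma A_K A_K^\top \succ 0$ by Lemma~\ref{lemm:lyap}), which pinpoints where the difficulty actually sits; the remainder is bookkeeping with Corollary~\ref{coroll:LQR}.
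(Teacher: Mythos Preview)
Your proposal is correct and follows essentially the same route as the paper, whose proof simply reads ``follows from the line of reasoning leading up to the theorem statement.'' You supply exactly that reasoning in explicit form: the rewriting $M_i = \zb_i\zb_i^\top - \gamma A_K \zb_i\zb_i^\top A_K^\top$, the identification of \eqref{eq:batch1} with the Lyapunov equation $P - \gamma A_K P A_K^\top = W$ (whose unique solution is positive definite since $\sqrt\gamma A_K$ is Schur), and the explicit feasible point $\lambda_i = 0$ for $i\le r$, $\lambda_i = 1$ for $i>r$ yielding $\sum_i \lambda_i M_i = W \succ 0$, after which Corollary~\ref{coroll:LQR}(1) closes the argument.
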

The proof follows from the line of reasoning leading up to the theorem statement. 
Note that, to guarantee uniqueness of the solution to \eqref{pb:LP_LQR_pi_data}, we should verify the sufficient condition \eqref{eq:LQR_bound_opt_LQR} of Corollary \ref{coroll:LQR}. 
In practice, simulations show that if the LMI \eqref{eq:LMIforC} is solved numerically, the degenerate case in which there is an infinite number of optimizers  never happens. 
Therefore, we avoid testing condition \eqref{eq:LQR_bound_opt_LQR}, which would be computationally expensive.

To determine the computational complexity of Algorithm \ref{algo1}, we note that solving the LMI \eqref{eq:LMIforC} is the most demanding operation. 
By using the SeDuMi solver, this requires $\Oc ( N^2(N+n+m)^{2.5} + (N+n+m)^{3.5})  $ operations \cite{labit2002sedumi}. Since $N = \Oc \big( (n+m)^2 \big),$ we conclude that the computational complexity  is $\Oc\big( (n+m)^{9} \big). $

\begin{algorithm}
\caption{}
\begin{algorithmic}[1]
\REQUIRE{policy $K$ }
\STATE Collect a dataset $D_1$ satisfying Assumption \eqref{ass:full_rank}
\STATE Solve the linear system \eqref{eq:batch1} for a certain $W\succ0$
\STATE Compute $P$ according to \eqref{eq:algo1_P}
\STATE Collect the dataset $D_2$ according \eqref{eq:algo1_D2}
\STATE Build the dataset $D$ by appending $D_2$ to $D_1$
\STATE Given $D$, solve the LMI  \eqref{eq:LMIforC} and let $\lambda_i, \; i=1,..,N$ be a solution.
\STATE Define $C$ according to \eqref{eq:feas}
\ENSURE{dataset $D,$ covariance matrix $C$}
\end{algorithmic}
\label{algo1} \end{algorithm}

\subsubsection*{Algorithm 2}
Algorithm 1 treats the covariance matrix $C$ in Problem \eqref{pb:LP_LQR_pi_data} as an additional decision variable. 
However, in many situations it may be more appropriate to fix the measure $c$ a priori, since it
may have a significant impact on the solution to  \eqref{pb:LP_LQR_pi_data} if Assumption \ref{ass:basis_fcn} is not satisfied. In the following, to match the specifications of a control
design we propose a strategy to sample a dataset $D$ such that Problem \eqref{pb:LP_LQR_pi_data} is bounded for a given covariance $C.$
First we compute  $P\succ 0$ solving the Lyapunov equality $ C = P - \gamma A_K P A_K\tp$ in a model-free manner.
As seen in Algorithm \ref{algo1}, this can be done by sampling a dataset $D_1$  with $r$ data points  satisfying Assumption \ref{ass:full_rank}. Then, $P$ is given by \eqref{eq:algo1_P} and \eqref{eq:batch1} and $W=C.$
Let  $\{v_i\}_{i=1}^{n+m}$ be  an orthonormal basis of eigenvectors of $P$ and $d_i > 0$ the corresponding eigenvalues;  recall that $P$ is a real symmetric matrix so all eigenvalues are real and an orthonormal basis of real eigenvectors always exists.
We sample a second batch of $n+m$ data points  $D_2 = \{x_i, u_i, l_i, x_i^+\}_{i=r + 1}^{N}$ so that 
\beq \label{eq:algo2_D2} \m{x_{i+r} \\  u_{i+r} }  = v_i \quad i=1, \dots, n+m,\eeq 
and we denote by $D$ the dataset obtained by appending $D_2$ to $D_1$. 

One can see that, for the data set $D$
there exist $\lambda_i \geq 0,$  $i=1,\dots ,N$, such that  \eqref{eq:feas} is satisfied, hence the LP  \eqref{pb:LP_LQR_pi_data} is bounded; for example, take $\lambda_{r + i} = d_i$ for $i = 1, \dots, n+m$ and $\lambda_i = 0$ otherwise. 
However this choice of $D$ introduces a certain ``structure'' in the problem, in the sense that
$ C \in \text{span} \{ M_{r+1}, \dots, M_{N} \}. $
Thus, the sufficient condition for uniqueness of the solution given in Corollary \ref{coroll:LQR} is not satisfied. On the contrary, simulations show that  
 we often end up in the degenerate case where the measure $c$ is perpendicular to a face of the feasible region, causing the problem to have an infinite number of maximizers.
We can overcome this issue by slightly perturbing the matrix $C.$ 
Specifically, we consider the matrix
\beq \label{eq:algo2_Ctilde}
\tilde C = C + \underbrace{\sum_{i=1}^{N} \varepsilon_i M_i}_{:=\Delta C} 
\eeq
with $\varepsilon_i>0$ positive small constants.  
Appropriately choosing the values of $\varepsilon_i,$ we guarantee that $\tilde C$ is positive definite and at the same time the distance $ \Vert \tilde C  - C \Vert$  is small.
This perturbation breaks the ``structure'' of the problem without affecting the feasibility of \eqref{eq:dual_Lp_pi_data}. 
Note that, by linearity of the trace operator, solving the LP problem \eqref{pb:LP_LQR_pi_data} with respect to $\tilde C$ instead of $C$ corresponds to considering the objective function 
$ \tr (QC) + \tr(Q \Delta C).$
This can be interpreted as a regularization procedure, with the coefficients $\varepsilon_i$ controlling the importance of the regularization term.

\begin{theorem}
A matrix $C \in \Sbb^{m+n},$ $C \succ 0$ is given. 
Let the covariance $\tilde C$ and dataset $D$ be obtained from Algorithm \ref{algo2}. Then the LP \eqref{pb:LP_LQR_pi_data},  solved with the perturbed $\Tilde{C},$ is bounded.
\end{theorem}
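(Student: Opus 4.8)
The plan is to reduce the claim to part~1 of Corollary~\ref{coroll:LQR}: the LP~\eqref{pb:LP_LQR_pi_data} with cost matrix $\tilde C$ is bounded if and only if $\tilde C$ can be written as a nonnegative combination $\sum_{i=1}^N \tilde\lambda_i M_i$ of the data matrices $M_i$. So the whole proof consists in exhibiting such multipliers for the $\tilde C$ returned by Algorithm~\ref{algo2}.

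First I would check that the model-free step of Algorithm~\ref{algo2} genuinely produces a $P\succ 0$ solving the discrete Lyapunov equality $C = P - \gamma A_K P A_K\tp$. Since $\rho(A_K)=\rho(A+BK)<1/\sqrt\gamma$ (Proposition~\ref{propo:stability} and the discussion following \eqref{eq:AK_def}), the matrix $\sqrt\gamma A_K$ is asymptotically stable, so this equality has a unique positive definite solution $P^\star$, and moreover the Stein operator $X\mapsto X-\gamma A_K X A_K\tp$ is invertible on $\Sbb^{m+n}$. Because $M_i = \m{x_i \\ u_i}\m{x_i \\ u_i}\tp - \gamma A_K\m{x_i \\ u_i}\m{x_i \\ u_i}\tp A_K\tp$ is the image of the rank-one matrix $\m{x_i \\ u_i}\m{x_i \\ u_i}\tp$ under this invertible operator, Assumption~\ref{ass:full_rank} on $D_1$ (with exactly $r=0.5(m+n)(m+n+1)$ samples) is equivalent to the matrices $\m{x_i \\ u_i}\m{x_i \\ u_i}\tp$, $i=1,\dots,r$, forming a basis of $\Sbb^{m+n}$. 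Hence the coefficients $p_i$ solving \eqref{eq:batch1} with $W=C$ are unique, and the reconstruction \eqref{eq:algo1_P} returns precisely $P^\star\succ 0$. This is the one step that needs care; it is essentially the persistence-of-excitation argument already invoked, without full detail, in the theorem for Algorithm~\ref{algo1}.

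Next I would plug the spectral decomposition $P=\sum_{i=1}^{n+m} d_i v_i v_i\tp$ into the choice of $D_2$. For the samples \eqref{eq:algo2_D2} one has $\m{x_{i+r}^+ \\ K x_{i+r}^+} = A_K v_i$, so $M_{r+i} = v_i v_i\tp - \gamma A_K v_i v_i\tp A_K\tp$; summing over $i$ with weights $d_i$ and using $C = P - \gamma A_K P A_K\tp$ yields $\sum_{i=1}^{n+m} d_i M_{r+i} = C$. Thus $C = \sum_{i=1}^N \lambda_i M_i$ with $\lambda_{r+i}=d_i>0$ and $\lambda_i=0$ for $i\le r$, all nonnegative.

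Finally, by \eqref{eq:algo2_Ctilde} we have $\tilde C = C + \sum_{i=1}^N \varepsilon_i M_i = \sum_{i=1}^N (\lambda_i+\varepsilon_i) M_i$ with $\lambda_i+\varepsilon_i\ge 0$ (strictly positive, in fact) since the $\varepsilon_i$ are positive. Hence $\tilde C$ lies in the conical hull of the $M_i$, and part~1 of Corollary~\ref{coroll:LQR}, read with $\tilde C$ in place of $C$, gives that \eqref{pb:LP_LQR_pi_data} solved with $\tilde C$ is bounded. I would close by noting that, as already remarked before the statement, the $\varepsilon_i$ can additionally be chosen small enough that $\tilde C\succ 0$, so that $\tilde C$ is a bona fide covariance matrix — but this plays no role in the boundedness argument itself.
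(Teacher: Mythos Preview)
Your proof is correct and follows exactly the route the paper intends: show that the dataset built by Algorithm~\ref{algo2} places $C$, and therefore $\tilde C=C+\sum_i\varepsilon_i M_i$, in the conical hull of the data matrices $M_i$, and then invoke part~1 of Corollary~\ref{coroll:LQR}. Your more careful verification that the coefficients $p_i$ from \eqref{eq:batch1} really reconstruct the unique positive definite Lyapunov solution $P$ (via invertibility of the Stein operator and Assumption~\ref{ass:full_rank}) is a welcome piece of rigor that the paper only sketches in passing.
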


The proof follows from the argument leading up to the theorem statement. The procedure is summarized in Algorithm 2. The algorithm 
2 does not require solving a LMI, which alleviates the computational requirements with respect to Algorithm \ref{algo1}.
The most expensive operation is solving the linear system of equations \eqref{eq:batch1} which requires $\Oc\big( (n+m)^{6} \big)$
operations.

\begin{algorithm}
\caption{}
\begin{algorithmic}[1]
\REQUIRE{policy $K,$ covariance matrix $C$}
\STATE Collect a dataset $D_1$ satisfying Assumption \ref{ass:full_rank}
\STATE Solve the linear system \eqref{eq:batch1} where $W=C$
\STATE Compute $P$ according to \eqref{eq:algo1_P}
\STATE Compute the eigenvectors of the matrix $P$ and collect the dataset $D_2$ according to \eqref{eq:algo2_D2}
\STATE Build the dataset $D$ by appending $D_2$ to $D_1$
\STATE Define $\tilde C$ according to \eqref{eq:algo2_Ctilde}
\ENSURE{dataset $D,$ perturbed covariance matrix $\tilde C$}
\end{algorithmic}
\label{algo2} \end{algorithm}


\section{Numerical examples} \label{sec:numerics}
We present two numerical examples to highlight the key aspects of the theory.
Example 1 gives evidence for the difficulty of obtaining a bounded LP \eqref{pb:LP_LQR_pi_data} by randomly generating the dataset, especially for large-scale systems.
Example 2 demonstrates that Algorithm \ref{algo1} and Algorithm \ref{algo2} do solve this unboundedness issue. Moreover, it compares the performance of the two algorithms with the regularized solution presented in \cite{regLP_CDC}.
\\
In all the simulations, we consider 
the LQR problem with discount factor $\gamma = 0.95$ and stage-cost matrix $L = \text{diag} ( I_n,  10^{-3} I_{m} ). $ 

\emph{Example 1:}
We perform a Monte Carlo simulation of $200$ experiments, where for each experiment:
\begin{enumerate} 
\item 
We randomly generate a LTI system \eqref{eq:LTI_sys} with $n$ states and $m$ inputs.
The diagonal elements of the matrix $A$ are equal to $0.5,$ whereas the off-diagonal entries of $A$ and the entries of the matrix $B$
are generated from a uniform distribution $\Uc (-0.3, 0.3).$
We define a stabilizing policy \eqref{eq:lin_pol} such that 
the closed-loop matrix $A+BK$ has $n$ real eigenvalues evenly spaced in the interval $[0.2,0.8].$
\item
We collect the dataset $D= \{x_i, u_i, l_i, x_i^+\}_{i=1}^{N}$ where $x_i \sim \Nc(0,1)$ and  $u_i \sim \mathcal{U} (-2, 2).$ 
\item We solve the LP \eqref{pb:LP_LQR_pi_data} with covariance $C = I.$
\end{enumerate} 
We are interested in evaluating the relative frequency $f$ of Monte Carlo experiments leading to bounded LP problems within the total number of experiments for different values of the system dimensions $(n,m)$ and different amount of data $N.$
The results are summarized in Figure \ref{fig:bounded}. 
The figure shows that, if we randomly generate the data, the number of constraints required to bound the cost of  \eqref{pb:LP_LQR_pi_data} grows rapidly with the system dimension. This issue is particularly  severe for large-scale systems.  

\begin{figure} 
\centering
\includegraphics[width=\linewidth]{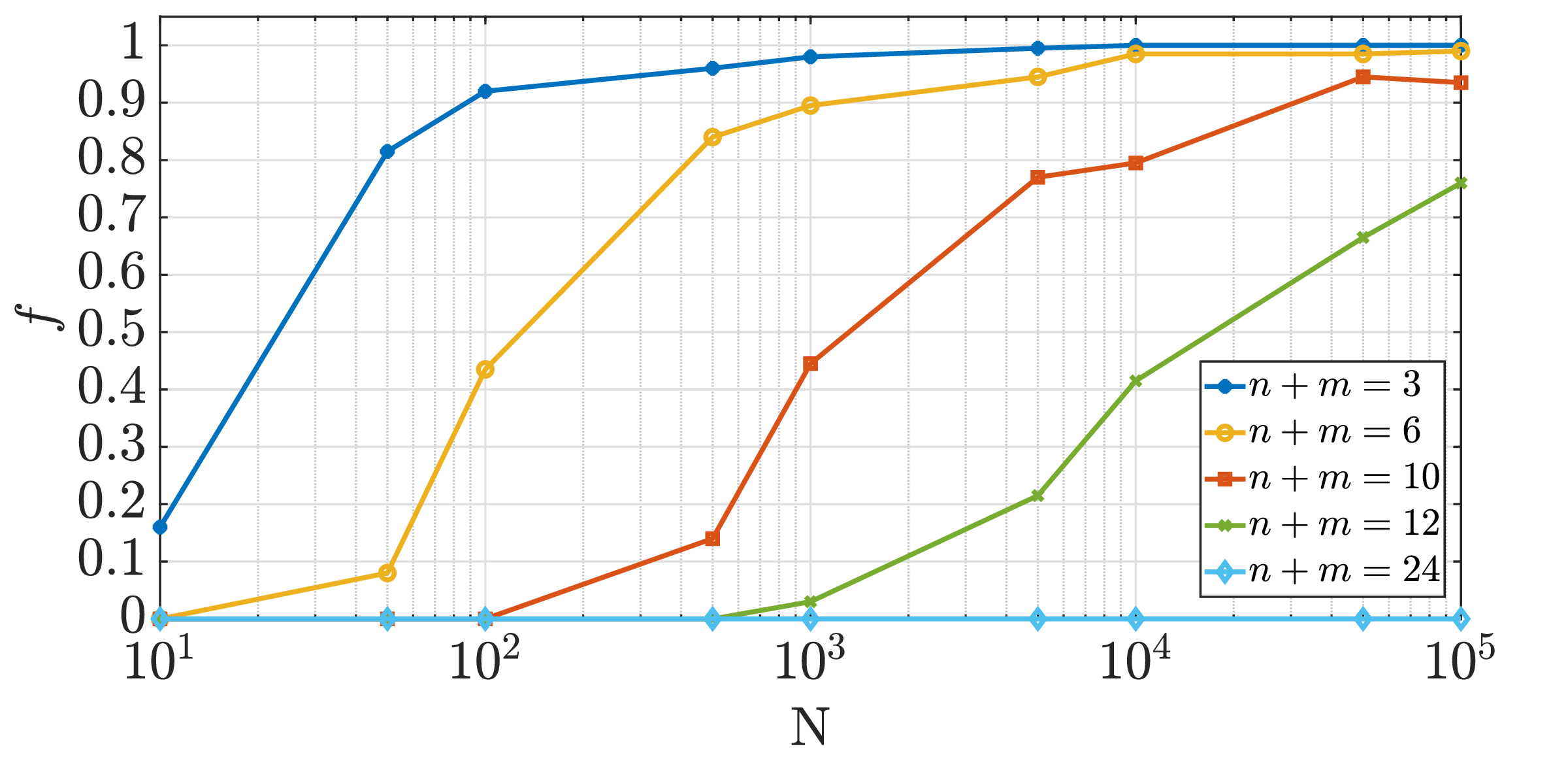}
\caption{Experiment 1: Relative frequency $f$ of Monte Carlo experiments leading to bounded LP problems as a function of the number of data points $N$ for different values of the state dimension $n$ and the number of control inputs $m$ of the system. }
\label{fig:bounded}
\end{figure}

\emph{Example 2:}
We perform a Monte Carlo simulation of $200$ experiments, where for each experiment:
\begin{enumerate} 
\item A system is generated as in Point 1 of Example 1.
\item
We apply Algorithm 1 with  $W=I$ in \eqref{eq:batch1} to determine the dataset $D$ and the covariance matrix $C;$ the dataset $D_1$ is collected by sampling $x_i \sim \Nc(0,1)$ and $u_i \sim \mathcal{U} (-2, 2).$ 
We solve the corresponding LP  \eqref{pb:LP_LQR_pi_data}.
\item
For the covariance matrix $C=I,$ we apply Algorithm 2 with $ \varepsilon_i = 0.01/ \Vert M_i \Vert$  for  $i=1,\dots, N$ in \eqref{eq:algo2_Ctilde} to select the dataset $D$  and the perturbed covariance $\tilde{C}$.
The dataset $D_1$ is collected by sampling $x_i \sim \Nc(0,1)$ and $u_i \sim \mathcal{U} (-2, 2)$.
We solve the LP  \eqref{pb:LP_LQR_pi_data} with the perturbed covariance $\tilde{C}.$ 
\item 
We randomly generate a dataset $D= \{x_i, u_i, l_i, x_i^+\}_{i=1}^{N}$  where $N = r + n+m,$ $x_i \sim \Nc(0,1)$ and $u_i \sim \mathcal{U} (-2, 2).$
We solve the convex approximation to the infinite LP \eqref{pb:LP_LQR_pi} proposed in \cite{regLP_CDC}
\beq  \label{eq:LP_pi_reg}
\begin{aligned} 
\tilde J_{\pi} = & \sup_{Q \in \mathbb{S}^{m+n}  }  \; \tr (QC)  \\
\text{s.t. }  &  \m{x_i \\ u_i}\tp Q \m{x_i \\ u_i} \leq l_i + \gamma  \m{x_i^+\\ K x_i^+ } \tp Q \m{x_i^+\\ K x_i^+ }, \\
&   \hspace{4.8cm} i=1,..,N, \\
&  \Vert Q \Vert\leq \theta .
\end{aligned} 
\eeq
Here, the last constraint acts as a regularization term; we choose the regularization parameter $\theta = 2 \Vert Q_{\pi} \Vert,$ where $Q_\pi$ is computed by solving the Lyapunov equation \ref{eq:lyap_Qpi}.
\item 
We assess the performance of the three algorithms by computing the error
$ e_{\pi} = \Vert \tilde{Q}_{\pi} - Q_{\pi} \Vert,  $
where $\tilde{Q}_{\pi}$ is the estimate of $Q_\pi$ obtained with one the previous approaches.
\end{enumerate}

Figure \ref{fig:algo3} displays the results of the simulations. 
The figure suggests that the strategies proposed in Section \ref{sec:smartsampl} are able to guarantee boundedness of the LP problem \eqref{pb:LP_LQR_pi_data} and to correctly recover the matrix $Q_\pi$ by using a small amount of data.
On the other hand, by randomly choosing a dataset with the same amount of data, the regularized approach does not lead to a good estimate of the q-function.
In most of the experiments, the solution to \eqref{eq:LP_pi_reg} is strongly influenced  by the regularization term, without which the problem would be unbounded.  
Figure \ref{fig:time} compares the computation time of Algorithm \ref{algo1} and Algorithm \ref{algo2} in one of the previous simulations for different values of the system dimensions $(m,n)$. As expected, Algorithm \ref{algo2}  is computationally more efficient.  

\begin{figure*}  
\centering
\begin{subfigure}{\columnwidth} 
\includegraphics[width=\linewidth]{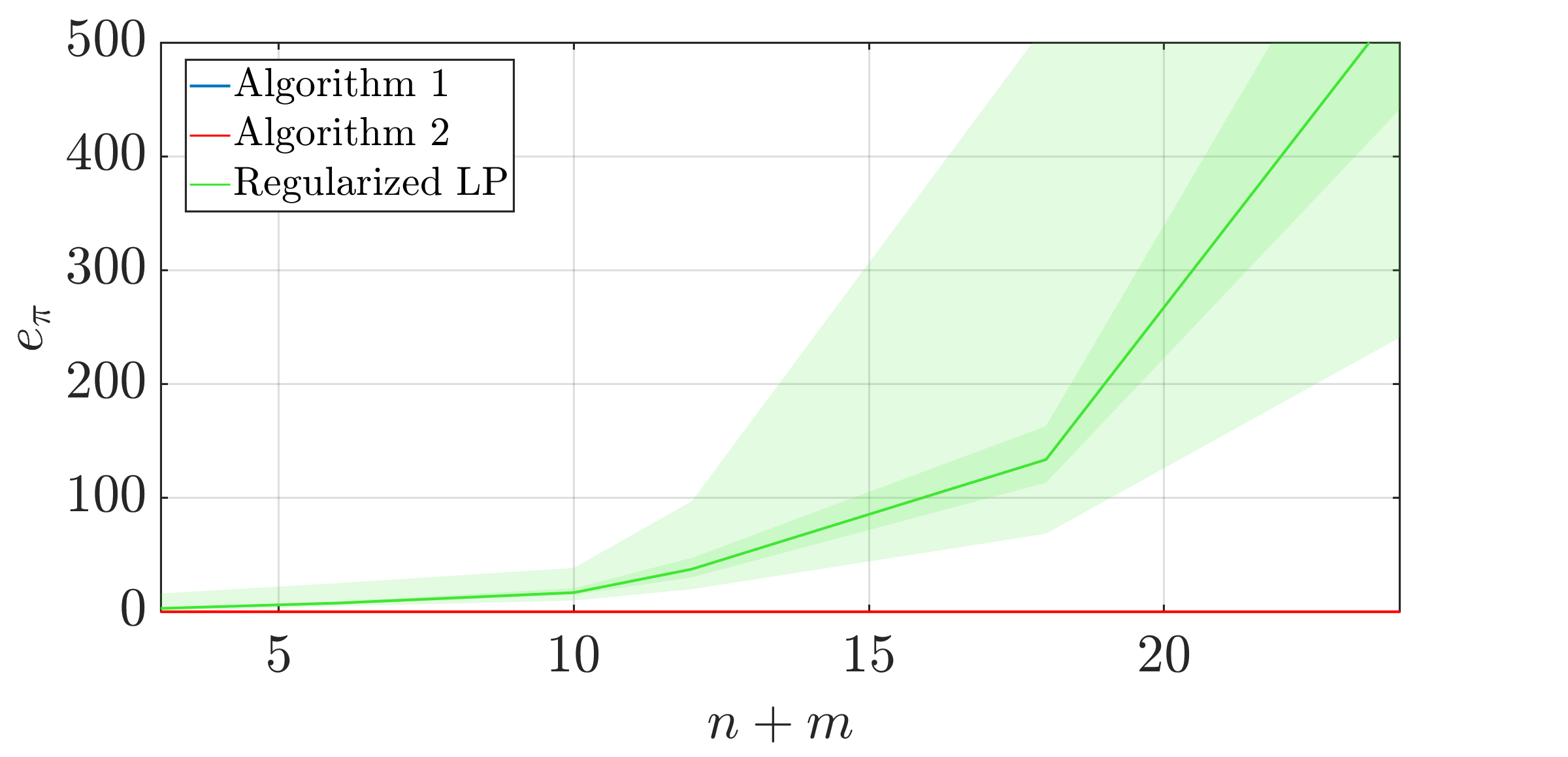}
\caption{}\label{fig:a}
\end{subfigure} 
\hfill
\begin{subfigure} {\columnwidth} 
\includegraphics[width=\linewidth]{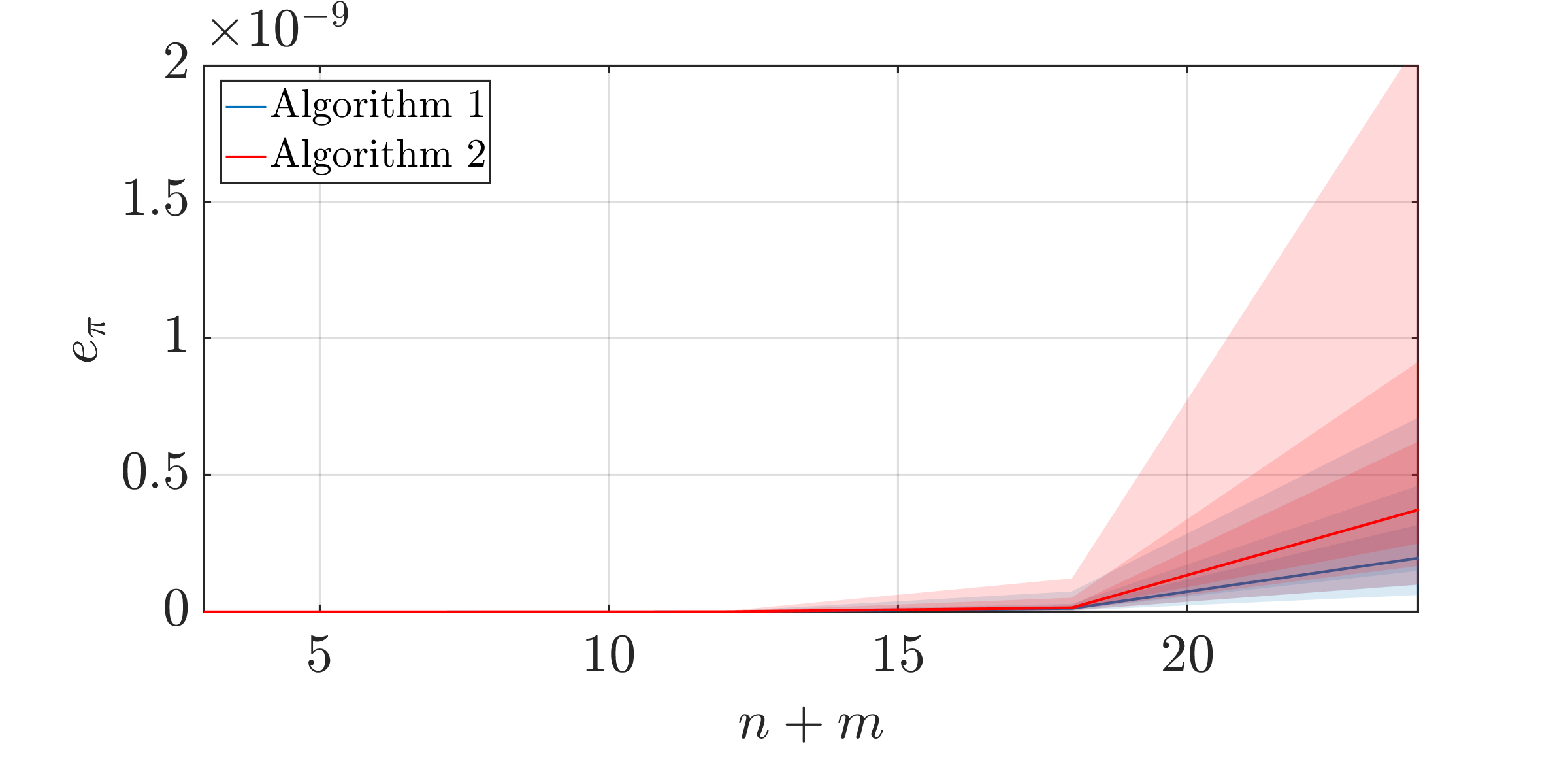}
\caption{}\label{fig:b}
\end{subfigure}
\caption{Experiment 2: Comparison of the error $e_\pi$ for the three different approaches for increasing values of the system dimensions $(n,m).$ 
The colored tubes represents the results between $[25\%,75\%]$ quantiles  and the median (solid lines) of $e_\pi$ across the 200 Monte Carlo experiments.  Figure \ref{fig:b} is a zoom of Figure \ref{fig:a}. }\label{fig:algo3}
\end{figure*}


\begin{figure}  
\includegraphics[width=\linewidth]{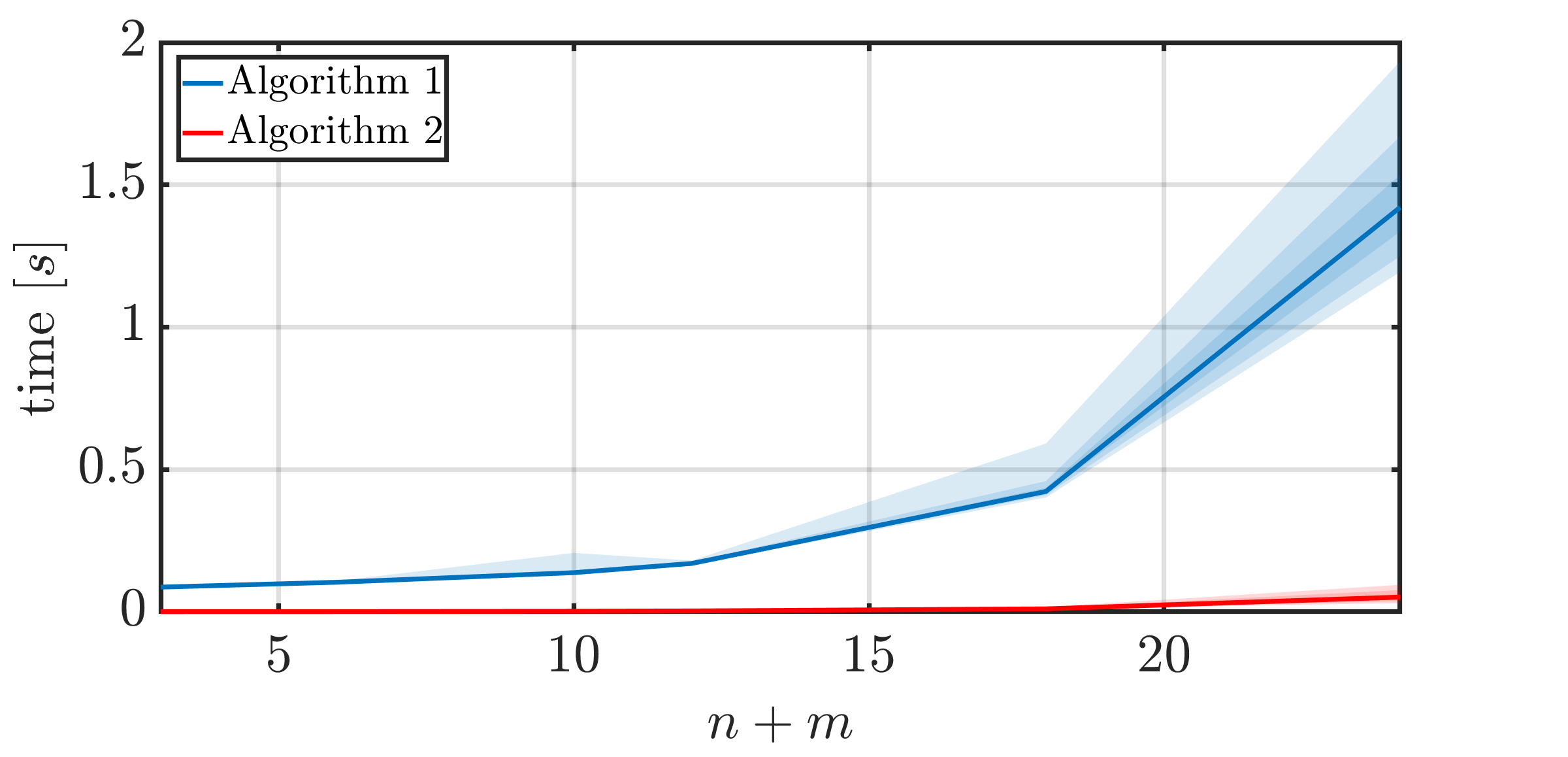}
\caption{Experiment 2: Comparison of the computation time of Algorithm \ref{algo1} and Algorithm \ref{algo2}   for increasing values of the system dimensions $(n,m).$  The colored tubes represents the results between $[25\%,75\%]$ quantiles  and the median (solid lines) of the computation time across the 200 Monte Carlo experiments. The simulations are run on a processor Intel\textregistered \hspace{0.01cm} 
 Core\texttrademark  \hspace{0.01cm} i5-8265U. } 
\label{fig:time}
\end{figure}


\section{Conclusions}

The LP approach is a promising method to tackle model-free infinite-dimensional optimal control problems. The main obstacle in its implementation is the difficulty of obtaining bounded solutions for reasonable amount of data. In this work we explore this question, navigating from exact to approximate, from infinite to finite programs, from LPs based on a policy to optimal LPs, and from nonlinear to linear systems. We discuss several fundamental properties on the geometry of the LPs, that culminate with necessary and sufficient condition for their boundedness. Extensive analysis is devoted to the data-driven LQR problem, where we show how to design the gradient in the objective function and construct a dataset to guarantee boundedness and optimality.  

A first research direction we envision is to specialize Theorem \ref{th:Farkas_pi} to a larger class of dynamical systems, \textit{e.g.}, polynomial, and give practical conditions on how to construct the dataset or the direction of growth to guarantee boundedness in the spirit of Algorithms \ref{algo1} and \ref{algo2}. Another challenging direction is to extend the present results to stochastic dynamical systems. In that case, one might start from the LP formulations introduced in \cite{cogill2006} and \cite{MARTINELLIAut}.

\setcounter{equation}{0}
\renewcommand{\theequation}{\thesection.\arabic{equation}}
\setcounter{lemma}{0}
\renewcommand{\thelemma}{\thesection.\arabic{lemma}}
\setcounter{theorem}{0}
\renewcommand{\thetheorem}{\thesection.\arabic{theorem}}
\setcounter{corollary}{0}
\renewcommand{\thecorollary}{\thesection.\arabic{corollary}}
\setcounter{propo}{0}
\renewcommand{\thepropo}{\thesection.\arabic{propo}}
\setcounter{assumption}{0}
\renewcommand{\theassumption}{\thesection.\arabic{assumption}}

\appendices
\section{Lyapunov Stability Theorems} 
We summarize some standard Lyapunov theorems for discrete-time LTI systems which will be extensively used in this paper. 
\begin{lemma} \label{lemm:lyap}
Consider the linear system 
$x_{t+1} = A x_t$ with $A \in \R^{n \times n}$. 
\begin{enumerate}
    \item If the spectral radius $\rho(A) < 1$ and $Z\in \Sbb^n,$ then 
    $ P = \sum_{k=0}^{\infty} (A\tp)^k Z A^k$  
    is the unique solution to the Lyapunov equation 
    $ P - A\tp P A = Z.$

    \item $\rho(A) < 1$  if and only if the Lyapunov LMI 
    \beq \label{app:LMI_lyap} 
    P - A\tp P A \succ 0,  \quad P \succ 0\eeq 
    is feasible.
    
    \item The maximum singular value $\sigma_{\max}(A) < 1$  if and only if $P=I$ is a feasible solution of the LMI \eqref{app:LMI_lyap}.
    \item  Let $\rho(A) < 1$    and $P\succ 0$ be any solution of the Lyapunov inequality \eqref{app:LMI_lyap}.
    Then  the matrix 
    $\bar A :=  P^{1/2} A  P^{-1/2} $ 
    is such that $     \sigma_{\max} (\bar A)  < 1 . $
 \end{enumerate}
\end{lemma}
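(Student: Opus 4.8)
The plan is to handle the four parts in sequence, each reducing to a short computation with matrix series or congruence transformations.

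For Part 1, I would first use $\rho(A)<1$ to fix some $\mu\in(\rho(A),1)$ and a constant $c>0$ with $\Vert A^k\Vert\le c\mu^k$ for all $k$ (Gelfand's formula together with submultiplicativity of the norm), which makes the matrix series $P=\sum_{k=0}^{\infty}(A^\top)^k Z A^k$ absolutely convergent, hence well defined. The identity $P-A^\top P A=Z$ then follows by reindexing $A^\top P A=\sum_{k\ge 0}(A^\top)^{k+1}ZA^{k+1}=\sum_{j\ge 1}(A^\top)^{j}ZA^{j}=P-Z$ and rearranging. For uniqueness, if $P_1,P_2\in\Sbb^n$ both solve the equation then $\Delta:=P_1-P_2$ satisfies $\Delta=A^\top\Delta A$, hence $\Delta=(A^\top)^k\Delta A^k$ for every $k$; letting $k\to\infty$ and using $\Vert A^k\Vert\to 0$ gives $\Delta=0$.

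For Part 2, the direction ($\Rightarrow$) is immediate from Part 1: take $Z=I$, so $P=\sum_{k\ge 0}(A^\top)^k A^k\succeq I\succ 0$ and $P-A^\top P A=I\succ 0$, giving feasibility. For ($\Leftarrow$), I would take any eigenvalue $\lambda\in\mathbb{C}$ of $A$ with eigenvector $v\neq 0$; since $A$ is real, conjugating $Av=\lambda v$ gives $v^{*}A^\top=\bar\lambda v^{*}$, so multiplying the feasible $P-A^\top P A\succ 0$ on the left by $v^{*}$ and on the right by $v$ yields $(1-|\lambda|^2)\,v^{*}Pv>0$; since $P\succ 0$ forces $v^{*}Pv>0$, we get $|\lambda|<1$, hence $\rho(A)<1$. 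This complex–eigenvector bookkeeping is the one place where a stray conjugate or transpose could slip, so it is the step I would write out most carefully, though it remains elementary.

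For Part 3, both directions collapse to the identity $\sigma_{\max}(A)^2=\lambda_{\max}(A^\top A)$: $P=I$ is feasible in \eqref{app:LMI_lyap} iff $I-A^\top A\succ 0$ iff $\lambda_{\max}(A^\top A)<1$ iff $\sigma_{\max}(A)<1$. For Part 4, note $P^{1/2}$ is symmetric and invertible, so I would simply compute $I-\bar A^\top\bar A=P^{-1/2}\bigl(P-A^\top P A\bigr)P^{-1/2}$, which is positive definite as a congruence transform of $P-A^\top P A\succ 0$; hence $\sigma_{\max}(\bar A)^2=\lambda_{\max}(\bar A^\top\bar A)<1$ (and $\rho(A)<1$ is not even needed here, being implied by feasibility via Part 2). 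No genuine obstacle is expected; the only care required is the convergence justification in Part 1 and the conjugation argument in Part 2.
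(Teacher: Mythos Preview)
Your proposal is correct and follows essentially the same approach as the paper: Parts 3 and 4 match the paper's argument almost verbatim (the congruence by $P^{-1/2}$ and the identification $\sigma_{\max}(A)^2=\lambda_{\max}(A^\top A)$), while for Parts 1 and 2 the paper simply cites standard references and you have written out precisely the textbook arguments those references contain. The only cosmetic difference is that the paper phrases Part 3 via $AA^\top$ rather than $A^\top A$, which is immaterial since both have the same nonzero spectrum.
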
 
\begin{proof}
For the first two points we refer the reader to \cite[Ch.6.10.E]{antsaklis1997linear} 
and \cite[Ch.2]{boyd1994linear}, respectively.
Point 3 easily follows from the fact that $\sigma_{\max} (A) = \sqrt{\rho (AA\tp)}$ and 
$ AA\tp \preceq (\sigma_{\max}(A))^2 I .$ 
Point 4 is obtained by premultiplying and postmultiplying \eqref{app:LMI_lyap} by $P^{-1/2};$ the conclusion follows from Point 3. \end{proof}

\setcounter{equation}{0}
\renewcommand{\theequation}{\thesection.\arabic{equation}}
\setcounter{lemma}{0}
\renewcommand{\thelemma}{\thesection.\arabic{lemma}}
\setcounter{theorem}{0}
\renewcommand{\thetheorem}{\thesection.\arabic{theorem}}
\setcounter{corollary}{0}
\renewcommand{\thecorollary}{\thesection.\arabic{corollary}}
\setcounter{propo}{0}
\renewcommand{\thepropo}{\thesection.\arabic{propo}}
\setcounter{assumption}{0}
\renewcommand{\theassumption}{\thesection.\arabic{assumption}}

\section{Convex Cones} \label{app:convexcone}
We briefly recall some basic definitions and properties of convex cones. 
Let $V$ be a real vector space. 
A set $\Cc \subseteq V$ is called a \emph{cone} if for every $v \in \Cc$ and $\lambda \geq 0$ we have $\lambda v \in  \Cc.$ 
The set $\Cc$ is a \emph{convex cone} if it is convex and a cone, thus for every $v_1, v_2 \in \Cc$ and every $ \lambda_1, \lambda_2 \geq 0,$ the vector $ \lambda_1  v_1 + \lambda_2 v_2 \in \Cc. $ 
A convex cone $\Cc \subseteq V$ is called \emph{salient} if it contains no 1-dimensional vector subspace of $V$. 
A convex cone is salient if and only if $\Cc \cap (-\Cc) = \{0\}$ \cite[ch.2.6.1]{edwards2012functional}. 
A cone $\Cc$ is \emph{polyhedral} (or finitely-generated) if it is the conic combination of finitely many vectors, or, equivalently, if it is the intersection of a finite number of half-spaces. 
Let $\Cc$ be a convex  polyhedral cone. 
A linear inequality $\alpha\tp x \leq \beta$, with $\beta \in \R,$ is \emph{valid} for $\Cc$ if it satisfied for all points $x \in \Cc$.  
A \emph{face} of $\Cc$ is any set of the form $ F = \{x \in \Cc \; : \; \alpha\tp x = \beta \} $ where $\alpha\tp x \leq \beta  $ is some valid inequality for $\Cc.$ 
The faces of dimensions 0, 1, $\dim (\Cc) - 2$, and $\dim (\Cc) -1$ are
called \emph{vertices}, \emph{edges}, \emph{ridges}, and
\emph{facets}, respectively \cite{ziegler_polytopes}.

\setcounter{equation}{0}
\renewcommand{\theequation}{\thesection.\arabic{equation}}
\setcounter{lemma}{0}
\renewcommand{\thelemma}{\thesection.\arabic{lemma}}
\setcounter{theorem}{0}
\renewcommand{\thetheorem}{\thesection.\arabic{theorem}}
\setcounter{corollary}{0}
\renewcommand{\thecorollary}{\thesection.\arabic{corollary}}
\setcounter{propo}{0}
\renewcommand{\thepropo}{\thesection.\arabic{propo}}
\setcounter{assumption}{0}
\renewcommand{\theassumption}{\thesection.\arabic{assumption}}

\section{Proof of Theorem \ref{propo:LQR_opt_bounded}} \label{app:proof}
We start with a background lemma.
Recall that the pair $(A,B)$ is \emph{reachable} if and only if the controllability matrix $\m{B & AB & \dots & A^{n-1}B }$ is full rank \cite[ch.22]{MIT_lecturenotes}. 
\begin{lemma}\label{lemm:preliminary}
    Consider a reachable pair $(A,B)$     and matrices $R \in \Sbb^n$ and $P_{\max} \in \Sbb^n.$ 
   There exists a $P_{\min} \in \Sbb^n$ such that  all  $P \in \Sbb^n$ satisfying the LMIs
    \begin{align}  
        P B &= 0 \label{eq:lmi1}\\
        R + A\tp P A - P &\succeq 0  \label{eq:lmi2} \\
        P &\preceq P_{\max},  \label{eq:lmi3}
    \end{align}
   also satisy $P \succeq P_{\min}.$
\end{lemma}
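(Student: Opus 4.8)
The plan is to produce a uniform lower bound of the form $P \succeq -c\,I$ in which the constant $c$ depends only on $A$, $B$ and $R$ (so, in particular, not on the particular feasible $P$), and then to take $P_{\min} := -c\,I$. The mechanism is that the constraint $P B = 0$, together with $P = P\tp$, makes the control term invisible to the quadratic form $x\tp P x$ along trajectories of the system, so that reachability lets us propagate the (trivial) value of $x\tp P x$ at the origin to an estimate at an arbitrary point of $\R^n$.

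First I would fix a feasible $P$ and an arbitrary target $v \in \R^n$. Since $(A,B)$ is reachable, the linear map sending a control sequence $u = (u_0, \dots, u_{n-1})$ to the terminal state $x_n$ of $x_{t+1} = A x_t + B u_t$, $x_0 = 0$, is onto $\R^n$ -- its image is the column span of $[B, AB, \dots, A^{n-1}B]$, which equals $\R^n$ by reachability -- hence it admits a bounded right inverse, and we may choose $u$ steering $0$ to $v$ with $\Vert u \Vert \le \kappa \Vert v \Vert$ for a constant $\kappa = \kappa(A,B)$. Along this trajectory, expanding $x_{t+1}\tp P x_{t+1} = (A x_t + B u_t)\tp P (A x_t + B u_t)$ and using $P B = 0$ (and $P=P\tp$) to kill the two cross terms and the $B u_t$ term, one gets $x_{t+1}\tp P x_{t+1} = x_t\tp A\tp P A\, x_t$. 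Evaluating the LMI $R + A\tp P A - P \succeq 0$ at $x_t$ then yields the one-step estimate $x_{t+1}\tp P x_{t+1} \ge x_t\tp P x_t - x_t\tp R\, x_t$.

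Next I would telescope this inequality for $t = 0, \dots, n-1$ and use $x_0 = 0$, obtaining $v\tp P v = x_n\tp P x_n \ge -\sum_{t=0}^{n-1} x_t\tp R\, x_t \ge -\Vert R \Vert \sum_{t=0}^{n-1}\Vert x_t\Vert^2$. Each intermediate state $x_t = \sum_{s=0}^{t-1} A^{t-1-s} B u_s$ is a bounded linear image of $u$, so $\sum_{t=0}^{n-1}\Vert x_t\Vert^2 \le \beta\,\Vert u\Vert^2 \le \beta\kappa^2 \Vert v\Vert^2$ for a constant $\beta = \beta(A,B)$ expressible through $\Vert A \Vert$, $\Vert B \Vert$ and $n$; therefore $v\tp P v \ge -\Vert R \Vert\,\beta\kappa^2\,\Vert v\Vert^2$. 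Since $v$ was arbitrary, $P \succeq -\Vert R \Vert\,\beta\kappa^2\,I =: P_{\min}$, which is the same for every feasible $P$ (in fact this bound does not even use the hypothesis $P \preceq P_{\max}$).

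The only delicate part I anticipate is the uniform control-energy bound -- that reachability produces a single $\kappa$ with $\Vert u \Vert \le \kappa \Vert v \Vert$ for all $v$ -- and, downstream of it, the bookkeeping of $\beta$; both are routine once one notes that the control-to-terminal-state map is surjective and hence has a bounded (e.g. Moore--Penrose) right inverse. If one prefers an argument that does invoke $P \preceq P_{\max}$, an alternative is a compactness argument: if the feasible set were unbounded, a sequence $P_k$ with $\Vert P_k \Vert \to \infty$, rescaled by $\Vert P_k \Vert$, would subconverge to a nonzero $\tilde P \preceq 0$ (here using $P \preceq P_{\max}$) with $\tilde P \preceq A\tp \tilde P A$ and $\tilde P B = 0$; writing $N = -\tilde P \succeq 0$ one shows inductively that $N A^j B = 0$ for all $j \ge 0$ (each step forced because $N \succeq 0$ and $A\tp N A \preceq N$), hence $N[B, AB, \dots, A^{n-1}B] = 0$, contradicting reachability.
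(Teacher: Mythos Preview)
Your main argument is correct and in fact proves more than the lemma claims: the lower bound $P \succeq -\|R\|\,\beta\kappa^2 I$ is independent of $P_{\max}$, so hypothesis \eqref{eq:lmi3} is superfluous. This is genuinely different from the paper's proof, which argues by contradiction: assuming a sequence $P_k$ whose smallest eigenvalue tends to $-\infty$, it extracts a subsequential limit $\bar v_1$ of the corresponding unit eigenvectors and, by repeatedly pre/post-multiplying \eqref{eq:lmi2} by $(A^jB)^\top$ and $A^jB$, shows $\bar v_1^\top[B\;AB\;\cdots\;A^{n-1}B]=0$, contradicting reachability; here the paper genuinely uses $P\preceq P_{\max}$ to keep the remaining eigenvalues from diverging. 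Your trajectory-telescoping argument is more elementary and constructive (with an explicit constant depending only on $A$, $B$, $R$), while the paper's approach is closer in spirit to the compactness alternative you sketch at the end---that alternative is essentially a tidier variant of the paper's proof, avoiding the eigendecomposition by rescaling $P_k/\|P_k\|$ directly and using $N\succeq 0$ together with $A^\top N A\preceq N$ to force $NA^jB=0$ inductively.
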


\begin{proof}
 Consider a sequence $\{ P_k\}_{k=1}^{\infty}$ , with  $P_k \in \Sbb^n$ satisfying \eqref{eq:lmi1}-\eqref{eq:lmi3} for any $k,$ and let 
$$ P_k = \sum_{i=0}^{n} \lambda_{k,i} v_{k,i} v_{k,i}\tp$$
where $\{ v_{k,i} \}_{i=1}^{n} $ is an orthonormal basis of eigenvectors of $P_k$, and $\lambda_{k,i}$ are the corresponding eigenvalues in increasing order, 
$ \lambda_{k,1} \leq \lambda_{k,2} \leq ... \leq \lambda_{k,n}.$
Let $V_k = [ v_{k,1} \; \ldots \;  v_{k,n} ] \in \R^{n \times n} .$
The set $ \{ V \in \R^{n\times n} \; : \;  V = [ v_1 \; \dots \;  v_n ], \; \Vert v_i  \Vert = 1 \text{ for } i = 1, .., n, \; \text{ and }  v_i \tp v_j = 0, \text{ for } i\neq j, \; i,j = 1, .., n \} $ is compact,   
so the sequence $\{ V_{k} \}_{k=1}^{\infty}$ converges, up to a subsequence,  to a matrix 
$ \bar V = [ \bar{v}_1 \; \ldots \;  \bar{v}_n] \in 
\R^{n\times n}$ which is orthogonal.
Assume, for the sake of contradiction, that 
\beq \label{eq:abs_ass}
\lim_{k \to \infty} \lambda_{k,1} = -\infty.
\eeq
Clearly, as $ \Vert v_{k,1} \Vert = 1,$ 
\beq  \label{eq:norm1} 
\Vert\bar{v}_1 \Vert = 1 \eeq
and, by \eqref{eq:lmi1},
$ \bar{v}_1 \tp B = 0.$
Premultiplying and postmultiplying \eqref{eq:lmi2} by $B\tp$ and $B,$ respectively, 
and exploiting \eqref{eq:lmi1}, we obtain
\beq \label{eq:step1} 
B\tp A\tp P_k A B = \sum_{i=1}^{n} \lambda_{k,i} B\tp A\tp  v_{k,i} v_{k,i}\tp  A B
\succeq -B\tp R B  \eeq
for any $k \in \mathbb{N}.$ 
From \eqref{eq:lmi3}, $ \lambda_{k,i}< \infty$ for any $i$ and $k$, therefore \eqref{eq:abs_ass} and \eqref{eq:step1} leads to 
$  \bar{v}_1 \tp AB = 0.$
Similarly, by premultipying and postmultiplying \eqref{eq:lmi2} by $(AB)\tp$ and $A B$, respectively, we have:
$$  B\tp (A\tp)^2 P_k A^2 B  \succeq  B\tp A\tp P_k A B -B\tp A\tp R A B.$$
By, with an analogous reasoning  to the previous one, from  \eqref{eq:step1}we obtain that 
$ \bar{v}_1 \tp A^2 B = 0.$ 
Repeating this argument $n-1$ times, where at each step we premultiply and postmultiply the equation \eqref{eq:lmi2} by $(A^{t} B) \tp  $ and $A^{t} B,$ respectively, with $t=1,\dots,n-1$, we have that
$$\bar{v}_1 \tp \m{B & AB & \dots & A^{n-1}B } = 0.$$
Hence, since $(A,B)$ is reachable,
$\bar v_1 = 0.$
This contradicts equation \eqref{eq:norm1}, concluding the proof. 
\end{proof}

\paragraph*{\normalsize Proof of Theorem \ref{propo:LQR_opt_bounded}}
Assume for simplicity that $\gamma=1;$ note that under the conditions of the theorem this implies that $\rho(A)<1$, hence the open loop system is asymptotically stable. The proof can be extended to the case $\gamma \in (0,1)$ by replacing $(A,B)$ with $(\sqrt{\gamma}A,\sqrt{\gamma}B)$. 
Recall that  
$$L = \m{L_1 & L_{12} \\ L_{12}\tp & L_{2} }, \quad L_1 \in \mathbb{S}^{n}, L_2 \in \mathbb{S}^{m}.$$  
and partion $Q$ analogously.
We show necessity by contraposition:
assume that the system is not reachable and  show that the set $\Fc^*$ is unbounded. 
Let $Q_1 \in \Sbb^{n}$ be a non-zero matrix such that
\beq \label{eq:Q1} Q_1 = - \sum_{k=0}^{\infty} (A\tp)^k M A^k,\eeq
with $M=v_M v_M\tp \in \Sbb^n $ and $v_M \in \R^n$ non-zero belonging to the left-kernel of the reachability matrix, \emph{i.e.}

\beq \label{eq:ctrb_matrix} v_M\tp  \m{ B & AB & \dots & A^{n-1}B } = \m{ 0 & 0 & \dots & 0}.\eeq
Notice that $Q_1$ given by \eqref{eq:Q1} is well defined since $\rho(A) <1.$ Moreover, by point 1 of Lemma \ref{lemm:lyap}, $Q_1$ satisfies the Lyapunov equation 
\beq \label{eq:delta3} A\tp Q_1 A - Q_1  = M. \eeq
Finally, from \eqref{eq:Q1} and \eqref{eq:ctrb_matrix} it follows that
\beq \label{eq:delta1} 
Q_1 B = 0.
\eeq  
Consider now the matrix $Q = \m{Q_1 & 0 \\ 0 & 0}$. Substituting \eqref{eq:delta1} into \eqref{eq:LMI_Q} leads to
$$ \LMI(Q) = \m{L_1 & L_{12} & 0 \\ L_{12}\tp & L_2 & 0 \\ 0 & 0 & 0} + \m{ A\tp Q_1 A - Q_1 & 0 & 0\\ 0 & 0 & 0 \\ 0 & 0 & 0}.  $$
As $L \succeq 0,$ by \eqref{eq:delta3}, $\LMI(Q) \succeq 0,$ hence $Q \in \Fc^*.$
Since $v_M$ can be selected arbitrarily large,
we can build a matrix $Q$ belonging to the set $\Fc^*$  such that $\Vert Q \Vert \geq l $  for any $l \geq 0.$
This means that $\Fc^*$ is unbounded concluding the necessity proof.
For sufficiency, we show that, under the reachability assumption, any $Q\in \Fc^*$  is finite. 
Since the diagonal blocks of $\LMI(Q)$ are positive semidefinite, with standard algebraic manipulation, we obtain that if $ Q \in \Fc^*$ then
\begin{align}
    Q_1 &\preceq Q_{1,\max} := \sum_{k=0}^{\infty} (A\tp)^k L_1 A^k \label{eq:Q1_upp}\\
    0 \preceq Q_2 & \preceq Q_{2, \max} := L_2 + B\tp Q_{1,\max} B \label{eq:Q2_upp_low}. 
\end{align}
Moreover,
\begin{align} 
L_1 + A\tp Q_1 A - Q_1 & \succeq 0  \label{eq:Q1bound1} \\
B\tp Q_1 B + L_2 &\succeq 0. \label{eq:Q1bound2}
\end{align}
Let $\{ v_i \}_{i=1}^{n} $ be a set of orthonormal eigenvectors of $Q_1$ and  $\lambda_i \in \R$ the corresponding eigenvalues.  
Assume that $ v_i  \in \R^n \setminus \ker(B\tp)$ for $i=1, \dots , r$ and $ v_i  \in \ker(B\tp)$ for $i=r+1, \dots, n$ with $  0 \leq  r \leq n. $ 
Then, 
$$ Q_1 = \underbrace{\sum_{i=1}^{r} \lambda_i v_i v_i\tp}_{:=\tilde{Q}_1} + \underbrace{\sum_{i=r+1}^{n} \lambda_i v_i v_i\tp}_{:= \bar{Q}_1}. $$ 
Note that in case $r = 0$ then $\tilde Q_1 = 0,$ and when  $r=n$ then $ \bar Q_1 = 0.$  
From \eqref{eq:Q1bound2} it follows that 
$
B\tp \tilde{Q}_1 B  + L_2 \succeq 0, 
$
which, together with \eqref{eq:Q1_upp}, implies 
that  $\tilde{Q}_1 \succeq \tilde{Q}_{1,\min} $ for some $\tilde{Q}_{1,\min} \in \Sbb^{n}.$  
Moreover,  
$ \bar{Q}_1 B = 0$
and, from \eqref{eq:Q1bound1}, 
$ \bar{L}_1 + A\tp \bar{Q}_1 A - \bar{Q}_1 \succeq 0 $
with $ L_1 := L_1 + A\tp \tilde{Q}_1 A - \tilde{Q}_1 \in \Sbb^{n}.$
Therefore, from Lemma \ref{lemm:preliminary} there exists $\bar{Q}_{1,\min} \in \Sbb^{n}$  such that $\bar{Q}_1 \succeq \bar{Q}_{1,\min}.$  
We conclude that $Q_1 \succeq Q_{1,\min}$ where $Q_{1,\min}:= \tilde{Q}_{1,\min} + \bar{Q}_{1,\min}.$
Next, we show that $Q_{12}$ is bounded for any $Q \in \Fc^*.$ 
Indeed, from $\LMI (Q) \succeq 0 $ it follows that  
$$ \m{L_2 + B\tp Q_1 B - Q_2 &  B\tp Q_{12} \\ Q_{12}\tp B & Q_2 } \succeq 0.$$ 
Hence, by using the generalized Schur complement \cite[Theorem 1.20]{zhang2006schur}, 
\beq \label{eq:Schur} L_2 + B\tp Q_1 B - Q_2 - B\tp Q_{12} Q_2^\dagger Q_{12}\tp B \succeq 0, \eeq
where the symbol $^\dagger$ denotes the Moore–Penrose inverse.
Moreover, there exists $M \in \R^{m\times m},$  such that 
 $B\tp Q_{12} = M Q_2$ \cite[Theorem 1.19]{zhang2006schur}. 
Substituting into \eqref{eq:Schur} 
we obtain
$$ M Q_2 M\tp \preceq  L_2 + B\tp Q_1 B - Q_2 \preceq  L_2 + B\tp Q_{1, \max} B. $$
It follows that $\Vert MQ_2^{1/2} \Vert \leq \Vert (L_2 + B\tp Q_{1,\max} B)^{1/2} \Vert,$ hence $\Vert B\tp Q_{12} \Vert   \leq  \Vert (L_2 + B\tp Q_{1,\max} B)^{1/2} \Vert  \Vert Q_{2,\max}^{1/2} \Vert.$
Following the same steps for the principal submatrix of $\LMI(Q)$ obtained by removing the columns and the rows corresponding to the second block, we also have that 
$\Vert A\tp Q_{12} \Vert $ is bounded.
Since the system is reachable $\ker(B\tp) \cap \ker(A\tp) = \{0\}.$ 
Hence boundedness of $\Vert B\tp Q_{12}\Vert$ and $\Vert A\tp Q_{12}\Vert$ implies that $\Vert Q_{12}\Vert$ is itself bounded, concluding the proof.
\hfill $\QED$


\bibliographystyle{IEEEtran}      
\bibliography{biblio_LPforADP}


\begin{IEEEbiography}[{\includegraphics[width=1in,height=1.25in,clip,keepaspectratio]{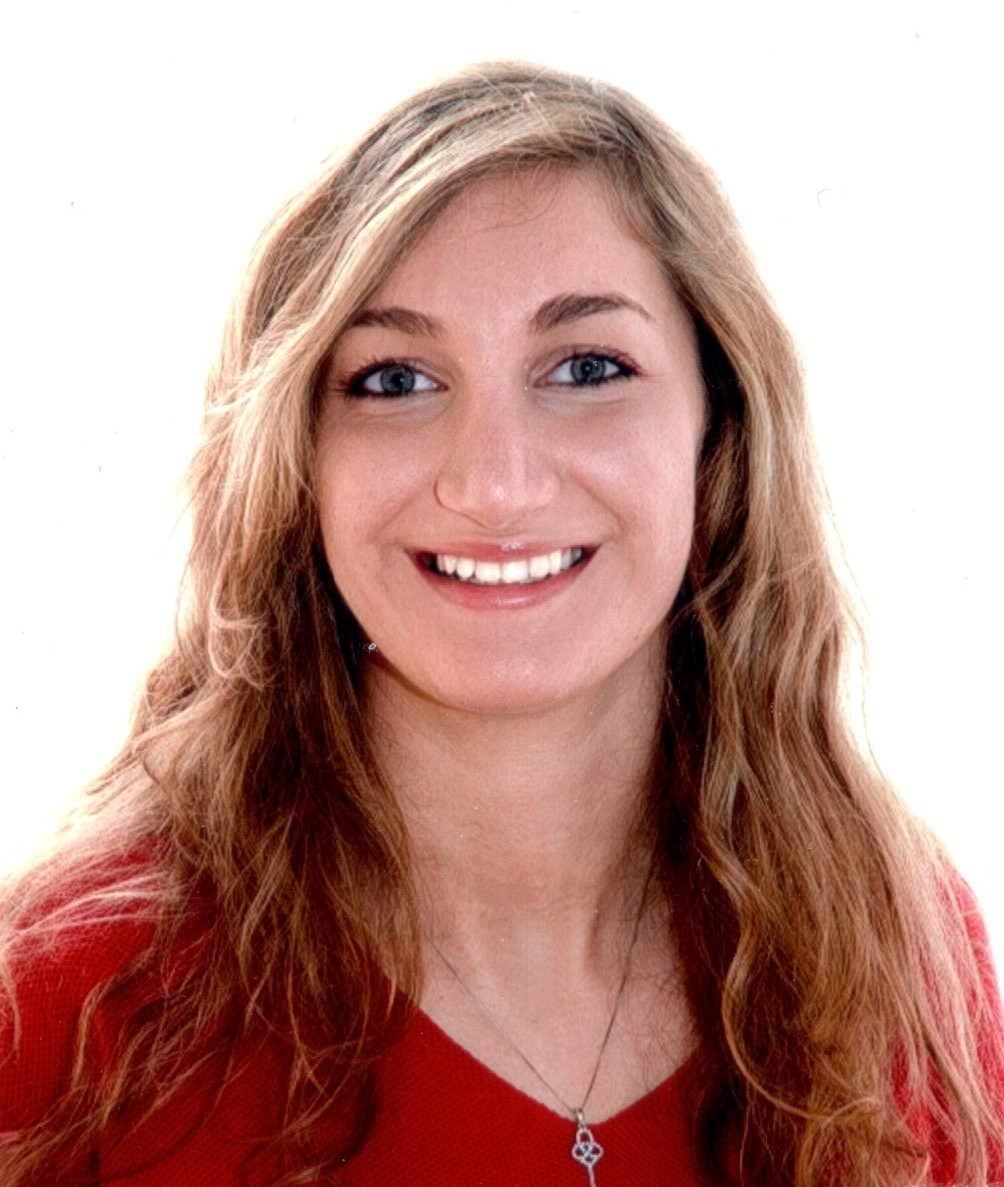}}]{Lucia Falconi}
received the B.Sc degree in Computer and Automation Engineering from Marche Polytechnic
University, Ancona, Italy, in 2018, and the M.Sc. degree in Automation
Engineering from the University
of Padova, Padova, Italy, in 2020. 
She is currently
pursuing the Ph.D. degree in Information Engineering
from the University of Padova. 
In 2022 she was a visiting Ph.D. student at the Automatic Control Laboratory, ETH Zürich, Switzerland.
Her research interests are in the areas of stochastic systems, systems identification and 
optimal control.
\end{IEEEbiography}

\begin{IEEEbiography}[{\includegraphics[trim={5cm 0 5cm 0},width=1in,height=1.25in,clip,keepaspectratio]{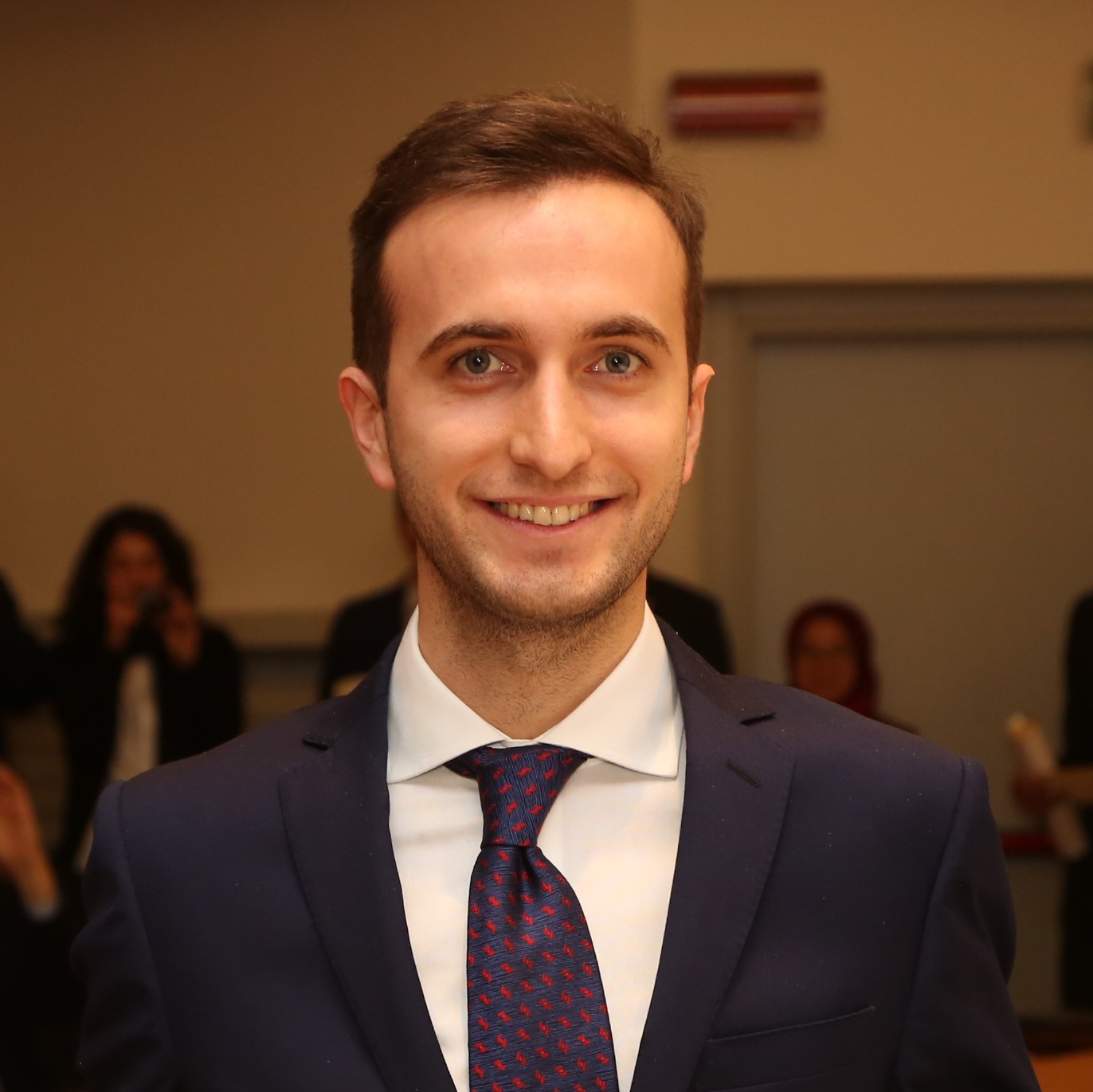}}]{Andrea Martinelli}
received the B.Sc. degree in management engineering and the M.Sc. degree in control engineering from Politecnico di Milano, Italy, in 2015 and 2017, respectively. He is currently working toward the Ph.D. degree at the Automatic Control Laboratory, ETH Zürich, Switzerland, since 2018. In 2017, he was a Master Thesis student at the Automatic Control Laboratory, EPF Lausanne, Switzerland. In 2018, he was a Research Assistant with the Systems \& Control Group at Politecnico di Milano. His research interests focus on data-driven optimal control and control of interconnected systems. 
\end{IEEEbiography}

\begin{IEEEbiography}[{\includegraphics[width=1in,height=1.25in,clip,keepaspectratio]{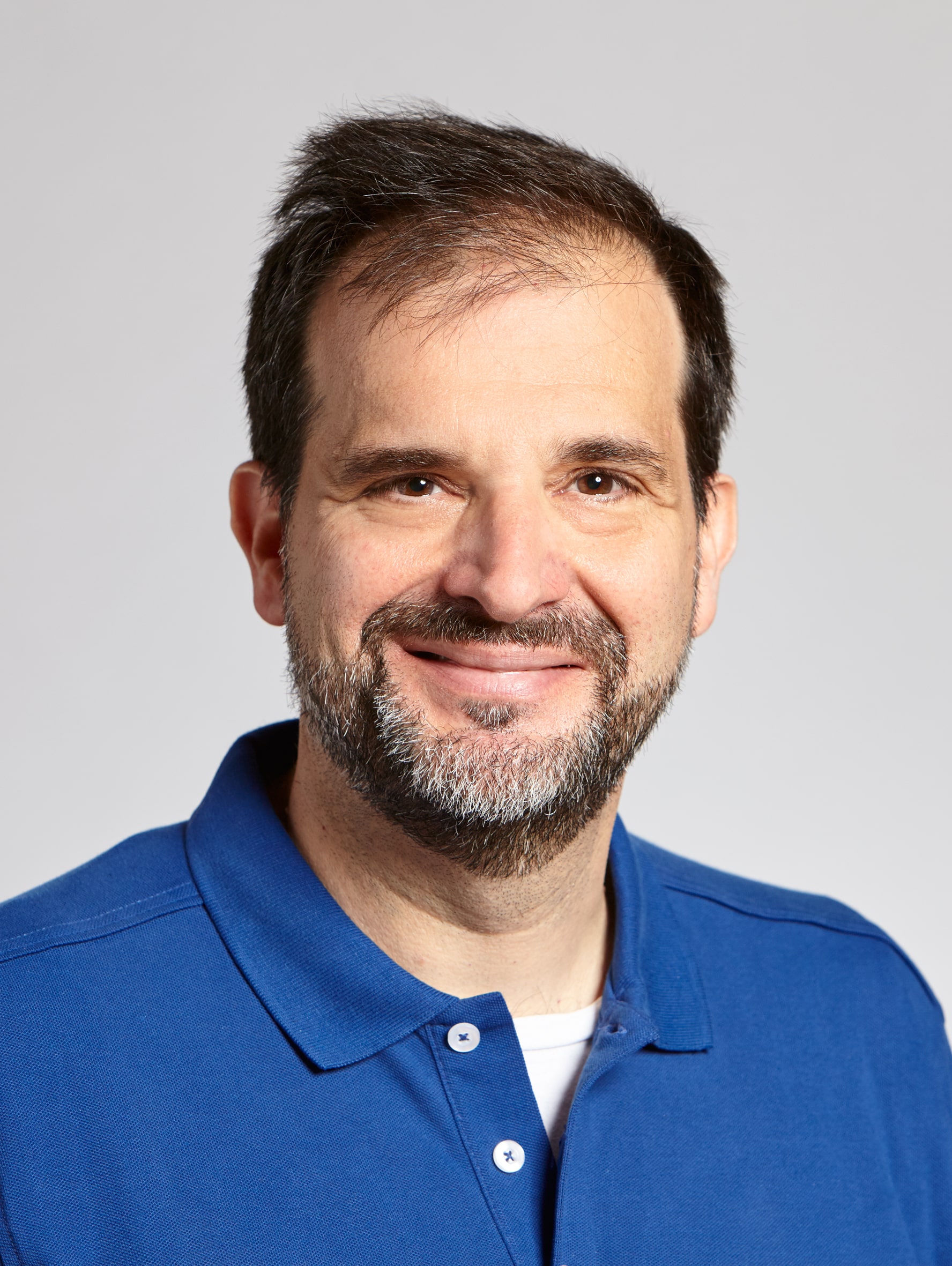}}]{John Lygeros} received a B.Eng. degree in 1990 and an M.Sc. degree in 1991 from Imperial College, London, U.K. and a Ph.D. degree in 1996 at the University of California, Berkeley. After research appointments at M.I.T., U.C. Berkeley and SRI International, he joined the University of Cambridge in 2000 as a University Lecturer. Between March 2003 and July 2006 he was an Assistant Professor at the Department of Electrical and Computer Engineering, University of Patras, Greece. In July 2006 he joined the Automatic Control Laboratory at ETH Zurich where he is currently serving as the Professor for Computation and Control and the Head of the laboratory. His research interests include modelling, analysis, and control of large-scale systems, with applications to biochemical networks, energy systems, transportation, and industrial processes. John Lygeros is a Fellow of IEEE, and a member of IET and the Technical Chamber of Greece. Since 2013 he is serving as the Vice-President Finances and a Council Member of the International Federation of Automatic Control and since 2020 as the Director of the National Center of Competence in Research "Dependable Ubiquitous Automation" (NCCR Automation).
\end{IEEEbiography}

\end{document}